\definecolor{NiColor}{RGB}{77,77,255}
\definecolor{NiColoRed}{RGB}{255,77,77}
\definecolor{NiCitation}{RGB}{0,181,26}
\newtheoremstyle{TheoremStyle}% <name>
{3pt}% <Space above>
{3pt}% <Space below>
{\slshape}% <Body font>
{}% <Indent amount>
{\sc}%{\bf}%{\itshape}% <Theorem head font>
{:}% <Punctuation after theorem head>
{.5em}% <Space after theorem head>
{}% <Theorem head spec (can be left empty, meaning 'normal')>
\theoremstyle{TheoremStyle}
\newtheorem{theorem}{Theorem}
\newtheorem{corollary}[theorem]{Corollary}
\newtheorem{proposition}[theorem]{Proposition}
\newtheorem{lemma}[theorem]{Lemma}
\newtheorem{remark}[theorem]{Remark}
\newtheorem{assumption}{Assumption}
\def\@endtheorem{\hfill$\lozenge$} % This way there is a bigskip after Theorem, Remarks, etc,...
\title{Lieb-Robinson bounds in classical oscillating lattice systems}
\author[a]{\href{mailto:ian.koot@fau.de}{Ian Koot}}
\author[b]{\href{mailto:chris.ven@fau.de}{C. J. F. van de Ven}}
\affil[a,b]{Friedrich-Alexander-Universit\"{a}t Erlangen-N\"{u}rnberg, Department of Mathematics, Cauerstra\ss e e 11, 91058 Erlangen, Germany}
\def\bearray{\begin{eqnarray}}
\def\earray{\end{eqnarray}}
\def\beq{\begin{equation}}
\def\eeq{\end{equation}}
\def\b0{{\bf 0}}
\def\R{\mathbb R}
\def\Z{\mathbb Z}
\def\S{\mathcal{S}}
\def\bR{{\mathbb R}} 
\def\R{{\mathbb R}}
\def\b{\textnormal{b}}
\def\CR{C_\mathcal{R}}
\def\SR{\mathcal{S}_\mathcal{R}}
\newcommand{\ran}{\operatorname{ran}}
\newcommand{\pos}{\textnormal{pos}}
\newcommand{\mom}{\textnormal{mom}}
\newcommand{\ignore}[1]{ }
\newcommand{\Hess}{\boldsymbol{\textup{H}}}
\newcommand{\op}{\textup{op}}
\begin{document}

\maketitle
\begin{abstract}
\noindent
The aim of this paper is two-fold. First, we prove the existence of Lieb-Robinson bounds for classical particle systems describing harmonic oscillators interacting with arbitrarily many neighbors, both on lattices and on more general structures. Second, we prove the existence of a global dynamical system on the commutative resolvent algebra,  a C*-algebra of bounded continuous functions on an infinite dimensional vector space, which serves as the classical analog of the Buchholz--Grundling resolvent algebra.
\end{abstract}

\tableofcontents

\section{Introduction}\label{Introduction}
Lieb-Robinson bounds provide a fundamental insight into the effective locality of interactions in quantum and classical many-body systems. Originally established by Lieb and Robinson in the context of quantum spin systems \cite{Lieb_Robinson_1972}, these bounds quantify the maximal speed at which information and correlations can propagate through a lattice of interacting particles. Despite the inherently non-relativistic nature of such systems, Lieb-Robinson bounds imply the existence of an emergent light-cone-like structure, limiting the influence of local perturbations to a finite velocity, commonly referred to as the Lieb-Robinson velocity.

The significance of these bounds extends beyond their conceptual appeal, playing a crucial role in rigorous studies of the dynamics of infinite quantum systems, the stability of topological phases, and the derivation of properties such as exponential clustering of correlations and the existence of thermodynamic limits \cite{Nachtergaele_Sims_2010, Hastings_Koma_2006, Nachtergaele_Sims_2006, Nachtergaele_Sims_Ogata_2006}. More recently, Lieb-Robinson bounds have also been formulated and applied in the analysis of quantum lattice systems with infinite degrees of freedom \cite{Deuchert_Lampart_Lemm_2025,NRSS_2009,Nachtergaele_Sims_Young_2019}, broadening their applicability;  in particular, in the context of resolvent algebras \cite{Deuchert_Lampart_Lemm_2025}.

Analogous bounds also hold in classical many-body systems, where, despite the absence of non-commutativity and unitary evolution, effective locality emerges under suitable conditions.  For example, in Hamiltonian lattice systems with finite-range or rapidly decaying interactions, it is possible to rigorously derive finite propagation speed estimates for perturbations, closely paralleling the quantum setting. Such classical Lieb-Robinson-type bounds have been proven for systems of coupled anharmonic oscillators and classical spins \cite{Marchioro_1978,MBK_2014}, offering insight into the finite-speed propagation and supporting kinetic descriptions in the thermodynamic limit.
\\\\
A particularly important class of models where these ideas apply is given by (an)harmonic lattice Hamiltonians. Here, each lattice site hosts a particle with a continuous degree of freedom, confined in a potential and coupled to its neighbors through linear or nonlinear forces. Such models serve as paradigms for the study of macroscopic non-equilibrium phenomena, such as heat conduction, emerging from many-body Hamiltonian dynamics \cite{Aoki,NRSS_2009}.  

In line with these considerations, this work addresses a broad class of infinite {\em classical} particle systems characterized by general interactions and heterogeneous lattice structures. More precisely,  we consider $\Gamma$ to be a countable index set, which can be interpreted, for example, as a lattice embedded in $\mathbb{R}^\ell$. Around each site $k \in \Gamma$, a particle is confined by a harmonic potential, while particles at different sites interact via pairwise attractive and/or repulsive forces. Denoting by $q_k$ the displacement of the particle at site $k$ from its reference position, and by $p_k$ its momentum, the formal Hamiltonian governing the system is given by
\begin{align}\label{hams}
    H(p,q) = \sum_{k \in \Gamma} \left( \frac{\|p_k\|^2}{2 m_k} + \frac{\nu_k \|q_k\|^2}{2} \right) + \frac{1}{2}\sum<f_{k,l \in \Gamma} V_{kl}(q_k - q_l),
\end{align}

where the mass parameters $m_k > 0$, the harmonic force constants $\nu_k > 0$, and the interaction potentials $V_{kl}$ vary with site indices and satisfy appropriate smoothness and summability conditions (see Section \ref{Sec:oscillating lattice systems} for precise definitions).

Unlike many traditional studies which assume uniform masses and force constants and impose periodic lattice structures, our framework allows for fully inhomogeneous parameters $\{m_k, \nu_k\}$ and considers arbitrary countable sets $\Gamma$ without reliance on any underlying geometric regularity. The thermodynamic limit is formulated purely via the natural partial order on finite subsets of $\Gamma$ under inclusion.  This generality enables our results to apply to a broad spectrum of solid-state material configurations, including but not limited to perfect crystals, amorphous solids such as glass, doped metals with irregular impurities, and novel nanostructures.
\\\\
To analyze locality and propagation properties within this general framework, Section \ref{LR} establishes a Lieb-Robinson-type bound for classical dynamics. Specifically, for any finite sublattice $\Lambda\Subset \Gamma$ of a fixed discrete set $\Gamma$ and any pair of bounded, smooth observables $f$ and $g$ supported on disjoint subsets $X, Y \subset \Lambda$, we show in Theorem \ref{thm: main 1} that the time evolution under the classical Hamiltonian flow  satisfies the estimate
\begin{equation}\label{LB-bounds}
\left\|\left\{\alpha^t_\Lambda(f), g \right\} \right\|_\infty \leq B \|f\|_{C^1} \|g\|_{C^1} \, \mathrm{D}(X, Y) \left( e^{A |t|} - 1 \right),
\end{equation}
where $\alpha^t_\Lambda$ denotes the pullback by the Hamiltonian flow on $\Lambda$, $\{\cdot,\cdot\}$ is the Poisson bracket, $\|\cdot\|_{C^1}$ is the $C^1$-norm, and $A, B > 0$ are constants independent of $f$, $g$, and $\Lambda$. Here,
$$
D(X,Y):=\sum_{x \in X} \sum_{y \in Y} F(d(x,y))
$$
with $F:[0,\infty)\to (0,\infty)$ a suitable function, \textit{cf.} Section \ref{Set-up} for details.  Note that these bounds resemble a classical analog of the usual Lieb-Robinson bounds known in quantum mechanics \cite{NRSS_2009,Nachtergaele_Sims_Ogata_2006,Nachtergaele_Sims_2006,Nachtergaele_Sims_2010}.

Building on this, Theorem \ref{Thm:existence} constitutes our second key result: within the framework of the commutative resolvent algebra introduced in \cite{vanNuland_2019} - a classical analog of  Buccholz-Grundling resolvent algebra \cite{Buchholz_Grundling_2008} - we leverage the Lieb-Robinson bound established in Theorem \ref{thm: main 1} to construct a global $C^*$-dynamical system describing the time evolution of observables in the infinite-volume limit.

This construction highlights the power of our algebraic approach: by capturing finite propagation speeds even in classical systems with potentially non-integrable interactions, it provides a robust foundation for analyzing locality and dynamics in highly general classical settings. Moreover, it opens the door to extending these methods to quantum systems with equally broad and physically realistic interaction structures.

\section{Mathematical setting}\label{Sec:oscillating lattice systems}
We introduce our conventions and assumptions regarding the classical oscillating and interacting (in)finite particle systems under consideration. Fixing some notation, for $F: V \rightarrow W$ a differentiable map between finite-dimensional real inner product spaces, we write $DF(v)$ for the total derivative of $F$ at $v$, meaning that
    \[ \lim_{h \rightarrow 0} \frac{\|F(v+h) - F(v) - DF(v)(h)\|}{\|h\|}=0. \]
We write $L(V,W)$ for the set of all linear maps between $V$ and $W$. For all linear functionals $\varphi\in L(V,\R)$, we write $\varphi^T$ for the element of $V$ that satisfies $\varphi(v) =  \varphi^T \cdot v$ for all $v \in V$. We also write $\nabla f$ for the gradient of a function $f: V\rightarrow \R$, so that $\nabla f(v) = Df(v)^T$.

\subsection{The phase space}\label{sct:subsection The phase space}

We consider an arbitrary countable set $\Gamma$ -- typically interpreted as a discrete subset of $\R^\ell$, namely, as the set of points of confinement around which the particles are pinned by a Harmonic potential. However only the set structure of $\Gamma$ is used, which already endows the set of finite subsets of $\Gamma$ with a partial order -- inclusion -- which is upward directed and hence defines a thermodynamic limit.
Assumptions on the material topology and geometry will be encoded not in $\Gamma$ but in our assumptions on the interaction potentials, to be discussed in \NAK \ref{subsection: Hamiltonian}.

To each element of $\Gamma$ we associate a phase space $\mathbb R^{2d}$, and our total phase space is given by
$$\Omega=\ell_\textnormal{c}(\Gamma,\R^{2d})=\ell_\textnormal{c}(\Gamma,\R^d)\times \ell_\textnormal{c}(\Gamma,\R^d)$$
consisting of pairs $\omega=(p,q)\in\Omega$ of finite sequences $p=(p_l)_{l\in\Gamma},q=(q_l)_{l\in\Gamma}$ for which each entry takes values in $\R^d$.
The components of $p_l$ (resp. $q_l$) in $\R^d$ are denoted $p_{l,i}$ (resp. $q_{l,i}$) for $i=1,\ldots,d$. By construction $\Omega$ is a countably infinite dimensional vector space which admits a natural inner product induced by the inclusion $\Omega\subseteq \ell^2(\Gamma,\R^{2d})$ into the square-summable sequences.

Let $\Lambda\Subset\Gamma$ be any finite subset labeling the particles of a subsystem. We define, for each finite subset $\Lambda\Subset\Gamma$,
$$\Omega_\Lambda:=\{(p,q)\in\Omega:~p_l=q_l=0\text{ for }l\notin\Lambda\}\cong \R^{2|\Lambda|d}.$$
% and, for each particle $l\in\Gamma$,
% $$\Omega_l:=\Omega_{\{l\}}\cong\R^{2n}.$$
We occasionally adopt the notation
\begin{align*}
    \Omega_\Lambda^\mom=&\{(p,0)\in\Omega_\Lambda\}\cong \R^{|\Lambda|d};\\ 
\Omega_\Lambda^\pos=&\{(0,q)\in\Omega_\Lambda\}\cong \R^{|\Lambda|d},
\end{align*}
and we note that $\Omega_\Lambda=\Omega_\Lambda^\mom\oplus\Omega_\Lambda^\pos$.
We furthermore emphasize that
$$\Omega=\bigcup_{\Lambda\Subset\Gamma}\Omega_\Lambda.$$

\subsection{Local Hamiltonians}\label{subsection: Hamiltonian}
%\note{(Should $V_{kk} = 0$? I think we implicitly assume that.)}
For each finite $\Lambda\Subset\Gamma$ we consider the local Hamiltonian 
\begin{align}\label{Hamiltoniannew2}
H_\Lambda(p,q):=\sum_{k\in\Lambda}\left(\frac{\|p_k\|^2}{2m_k}+\frac{\nu_k\|q_k\|^2}{2}\right)+\frac{1}{2}\sum_{k,l\in\Lambda}V_{kl}(q_k-q_l),
\end{align}
for $(p,q)\in\Omega$. 
Here, $\|\cdot\|$ is the Euclidean norm on $\R^d$,  and $m_k>0$ and $\nu_k> 0$ denote the mass and force constant of particle $k$, and $V_{kl}$ denotes the interaction potential between particles $k$ and $l$, subject to conditions below. 
We note that $H_\Lambda(p,q)$ depends solely on $(p_\Lambda,q_\Lambda)=\pi_\Lambda(p,q)\in\Omega_\Lambda$, and hence we may view $H_\Lambda$ as a function acting on the finite-dimensional phase space $\Omega_\Lambda$.
Observe furthermore that the model defined by \eqref{Hamiltoniannew2} can be interpreted as a generalization of an oscillating and interacting lattice system. 

For a multi-index $\beta:\{1,\cdots, r\}\to\mathbb{Z}_{\geq 0}$ with $|\beta|=\sum_{i=1}^{r}\beta(i)$,  we write
$$\partial^\beta:=\partial_{1}^{\beta(1)}\cdots\partial_{d}^{\beta(r)},$$
where $\partial_{i}^{\beta(i)}=\partial^{\beta(i)}/\partial x_i^{\beta(i)}$ are the usual partial derivatives of order $\beta(i)$ corresponding to the $i^{th}$ coordinate of $\mathbb{R}^d$.

\begin{assumption}\label{conditions}
The following conditions are assumed:
\begin{itemize}
    \item[(i)] $V_{kl}(x)=V_{lk}(-x)$, \quad $V_{kk}(x)=0$;
\item[(ii)] $V_{kl}\in C_0^{\infty}(\mathbb{R}^d,\bR)$ for each $k,l\in\Gamma$;
\item[(iii)] there exists a constant $C_V\geq 0$ and $C_{kl}>0$  for each $k,l\in\Gamma$ such that
$$\|\partial^\beta V_{kl}\|_\infty\leq C_{kl}C_V^{|\beta|}$$
for all $\beta:\{1,\ldots,r\}\to\Z_{\geq 0}$;
%\item[(iii b)] The quantity $c:=\sup_{k\in\Gamma}\sum_{l\in\Gamma}C_{kl}<\infty$;
\item[(iv)]  $0<\inf_{k\in\Gamma}\{\frac{1}{m_k}\}\leq\sup_{k\in\Gamma}\{\frac{1}{m_k}\}<\infty$, and $0<\inf_{k\in\Gamma}\{\nu_k\}\leq\sup_{k\in\Gamma}\{\nu_k\}<\infty$. %$\sup_{k\in\Gamma}\{m_k\nu_k,1/(m_k\nu_k)\}<\infty$.
\end{itemize}
\end{assumption}
It is not difficult to see that the Hamiltonian vector field is \emph{globally Lipschitz} on $\Omega_\Lambda$ \cite[Sec. 2.5]{Nuland_Ven_2023}. Therefore, by the Picard Lindel\"{o}f theorem the Hamiltonian equations admit a unique globally defined solution for every initial condition in $\Omega_\Lambda$. In particular, this guarantees the existence of a continuous Hamiltonian flow
\[
\Phi_t : \Omega_\Lambda \to \Omega_\Lambda,
\]  
which is a homeomorphism for each $t \in \mathbb{R}$.
We furthermore introduce the following notation:
\begin{itemize}\label{notation}
    \item $(p,q)$ $\rightarrow$ arbitrary point in $\Omega_\Lambda$.
    \item $p_j,q_j$ $\rightarrow$ position and momentum component at site $j \in \Lambda$ of the arbitrary point $(p,q) \in \Omega_\Lambda$.
    \item $p(t),q(t)$ $\rightarrow$ the functions that take as an input an initial value (say $(p,q)$), and output the value of the position/momentum at time $t$ (so $\Phi_t(p,q)$. In other words: $p(t) = \pi_{\mom}\circ\Phi_t$ and $q(t) = \pi_{\pos}\circ \Phi_t$.
    \item $Q(t),P(t)$ $\rightarrow$ $Q(t) = q(t)(P(0),Q(0))$ and $P(t) = p(t)(P(0),Q(0))$ for some fixed choice of $P(0),Q(0) \in \Omega_\Lambda$.
\end{itemize}

%The summability condition (iii b) is an abstraction and generalisation of the stability condition of \cite{Vuillermot_1980}. 

\subsection{Conditions on $\Gamma$}\label{Set-up}
Let $\Gamma$ be a countable metric space equipped with a metric $d$.  We restrict the geometry of $\Gamma$ by assuming the existence of a function
$$
F: [0,\infty)\to (0,\infty),
$$
with the following properties (see e.g. \cite{NRSS_2009,Nachtergaele_Sims_Ogata_2006}):
\begin{assumption}
We assume:
    \begin{enumerate}
    \item \textit{Monotonicity:} $F$ is non-increasing in its argument, i.e.,
    $$
    F(x) \leq F(y) \quad \text{for all } y \leq x,
    $$
    and normalized such that
    \[
    F(0) = 1.
    \]
\item \textit{Uniform integrability:} For every fixed $y \in \Gamma$, the sum over all $x \in \Gamma$ of $F(d(x,y))$ is finite:
    $$
    \sum_{x \in \Gamma} F(d(x,y)) < \infty,
    $$
    and
    $$
    \|F\| := \sup_{y \in \Gamma} \sum_{x \in \Gamma} F(d(x,y)) < \infty.
    $$
\item \textit{Convolution property:} There exists a constant $C_F > 0$ such that for all $x,y \in \Gamma$,
    $$
    \sum_{z \in \Gamma} F(d(x,z)) F(d(z,y)) \leq C_F F(d(x,y)).
    $$
\end{enumerate}
\end{assumption}
Consider the potential function $\Psi$ defined on two point sets $Z=\{k,l\}$ by $\Psi(Z):=V_{kl}:\mathbb{R}^d\to\mathbb{R}$. It follows that
$$H_\Lambda=\sum_{k\in\Lambda}H_k^0+\sum_{\substack{Z\subset\Lambda\\ |Z|=2}}\Psi(Z),$$
where $$H_k^0=\frac{\|p_k\|^2}{2m_k}+\frac{\nu_k\|q_k\|^2}{2}.$$
Finally, we assume the following compatibility condition between the grid $\Gamma$ and the interaction part of the Hamiltonian:
\begin{assumption}\label{ass:psi}
    The following constant is finite:
$$
\|\Psi\|:=\sup_{k,l \in \Gamma}\frac{C_{kl}}{F(d(k,l))}<\infty,$$
where the $C_{kl}$ are the constants from Assumption (iii). In particular, for all $k,l\in\Gamma$
$$C_{kl}\leq \|\Psi\|F(d(k,l)).
$$

\end{assumption}

%uniform in $k,l\in\Gamma$ and $t\in\mathbb{R}$, it follows that
%$$\|\partial^\beta V_{kl,t}\|_\infty\leq C_{kl} C_V^{|\beta|},$$
%it follows that for every partial derivative
%$$\|\partial^{\beta_j}V_{kl,t}\|_\infty\leq C_V^{|\beta_j|}\|\Psi\|F(d(k,l)).$$

\section{Lieb-Robinson bounds}\label{LR}

\subsection{Estimates on time-evolved Poisson bracket}\label{sec:estimates}
Building on the methods developed in \cite{Marchioro_1978}, we generalize them to our framework. Let $X,Y \subset \Lambda$ be finite subsets, and $f_0 \in C_b^1(\Omega_X)$ and $g_0 \in C_b^1(\Omega_Y)$. In order to compare $f_0$ and $g_0$, in accordance with the structure of the commutative resolvent algebra (see Section \ref{Commutative resolvent algebra}) we define $\pi_{
X,\Lambda}: \Omega_\Lambda \rightarrow \Omega_X$ and $\pi_{Y,\Lambda}: \Omega_\Lambda \rightarrow \Omega_Y$ to be the orthogonal projection from $\Omega_\Lambda$ onto $\Omega_X$ and $\Omega_Y$, respectively (recall that all these spaces are finite-dimensional). We then write $f := f_0\circ\pi_{X,\Lambda}$ and $g:= g_0\circ\pi_{Y,\Lambda}$. Their $ C^1 $-norms are defined as:
$$
\|f\|_{C^1} :=\|f\|_\infty + \| \nabla f\|_{2,\infty}, \quad
\|g\|_{C^1} :=  \|g\|_\infty + \| \nabla g\|_{2,\infty},
$$
the notation $\|\nabla f\|_{2,\infty}$ stands for $\sup_{(p,q) \in \Omega_\Lambda} \|\nabla f(p,q)\|_2 $ with $\|\cdot\|_2$ the standard Euclidean norm on $\Omega_\Lambda$ induced by $\ell^2(\Gamma,\R^{2d})$, and similarly for $ \| \nabla g\|_{2,\infty}$. In other words, we have for $(p,q)\in \Omega_\Lambda$ that
    \[ \|(p,q)\|_2 = \sqrt{\sum_{j\in \Lambda}\|p_j\|^2 + \|q_j\|^2} =\sqrt{\sum_{j\in\Lambda}\sum_{i=1}^d p_{j,i}^2 + q_{j,i}^2 }.\]
For fixed lattice points $j \in \Lambda$ with $(p_j,q_j)\in\mathbb{R}^{2d}$, we simply write $$
\|(p_j,q_j)\|=\sqrt{\sum_{j=1}^
d p_{j,i}^2+q_{j,i}^2},
$$
without explicitly indicating the subscript $2$.
The Poisson bracket of the time-evolved observable $\alpha_t(f):=f\circ \Phi_t$ with $g$ is given by
$$
\{\alpha_t(f), g\} = \sum_{j \in \Lambda} \sum_{i=1}^d \left( 
\frac{\partial \alpha_t(f)}{\partial q_{j,i}}  \frac{\partial g}{\partial p_{j,i}} 
- \frac{\partial \alpha_t(f)}{\partial p_{j,i}}   \frac{\partial g}{\partial q_{j,i}} \right).
$$ 
%We write
%$$
%\{f+(\alpha_t(f)-f), g\}(p,q) = \sum_{j \in \Lambda} \sum_{i=1}^d \left( 
%\frac{\partial \alpha_t(f)}{\partial q_{j,i}} \cdot \frac{\partial g}{\partial p_{j,i}} 
%- \frac{\partial \alpha_t(f)}{\partial p_{j,i}} \cdot \frac{\partial g}{\partial q_{j,i}} 
%\right).
%$$
%Note that whenever $g$ and $f$ have disjoint support, their Poisson bracket vanishes at $t=0$. Under this assumption, it follows that
%$$
%\{\alpha_t(f), g\}(p,q) = \sum_{j \in \Lambda} \sum_{i=1}^d \left( 
%\frac{\partial (\alpha_t(f)-f)}{\partial q_{j,i}} \cdot \frac{\partial g}{\partial p_{j,i}} 
%- \frac{\partial (\alpha_t(f)-f)}{\partial p_{j,i}} \cdot \frac{\partial g}{\partial q_{j,i}} 
%\right).
%$$
We will bound this quantity by bounding on the one hand the derivatives of $f$ and $g$, and on the other hand the dependence of the time evolution on the initial conditions. To aid with notation, for all $k \in \Lambda$ let us define the projections 
\begin{IEEEeqnarray*}{rClrCl}
    \pi_{\mathrm{pos}}(p,q) & := & q \in \Omega_\Lambda ^{\textup{pos}} \cong \R^{|\Lambda|d} \quad & \pi_{\mathrm{pos},k}(p,q) &:= &q_k \in \Omega_{\{k\}}^{\pos} \cong \R^{d} \\
    \pi_{\mathrm{mom}}(p,q) &:=& p \in \Omega_{\Lambda}^{\textup{mom}} \cong \R^{|\Lambda|d} \quad &  \pi_{\mathrm{mom},k}(p,q) &:=& p_k \in \Omega_{\{k\}}^\mom \cong \R^d. 
\end{IEEEeqnarray*}
where $(p,q) \in \Omega_\Lambda^\mom \oplus \Omega_\Lambda^\pos = \Omega_\Lambda$. We will write $\pi_{\mathrm{pos}}^*: \Omega_\Lambda^{\textup{pos}} \rightarrow \Omega_\Lambda$ etc. for the conjugate maps, i.e. the associated inclusion. We then define for all functions $h \in C^1_b(\Omega_\Lambda,V)$ (where $V$ is a finite dimensional inner product space) the generalized partial derivatives
    \begin{align*} \frac{\partial h}{\partial q}(p,q) & := D h(p,q) \circ \pi_{\textup{pos}}^* :\Omega_\Lambda^{\textup{pos}} \rightarrow V \\
    \frac{\partial h}{\partial p}(p,q) & := D h(p,q) \circ \pi_{\textup{mom}}^*: \Omega_\Lambda^{\textup{mom}} \rightarrow V\end{align*} 
and also for all $k\in \Lambda$ the site-specific derivatives
\begin{align*} \frac{\partial h}{\partial q_k}(p,q) & := D h(p,q) \circ \pi_{\textup{pos},k}^* :\Omega_{\{k\}}^{\textup{pos}} \rightarrow V \\
    \frac{\partial h}{\partial p_k}(p,q) & := D h(p,q) \circ \pi_{\textup{mom},k}^*: \Omega_{\{k\}}^{\textup{mom}} \rightarrow V\end{align*} 
We note that if $V= \R$ then 
\[ \left(\frac{\partial h}{\partial q}(p,q) \right)^T = \pi_{\pos} (Dh(p,q))^T = \pi_\pos \nabla h(p,q) \]
and similarly for $\frac{\partial h}{\partial p}(p,q)$, $\frac{\partial h}{\partial q_k}(p,q)$, and $\frac{\partial h}{\partial p_k}(p,q)$ for all $k \in \Lambda$. We can then succinctly write
    \[ \{ \alpha_t(f),g\} =  
\left(\frac{\partial \alpha_t(f)}{\partial q}\right)^T \cdot \left( \frac{\partial g}{\partial p}\right)^{T}  - \left(\frac{\partial \alpha_t(f)}{\partial p} \right)^T \cdot \left( \frac{\partial g}{\partial q}\right)^{T} \]
where the inner product and transpose are taken pointwise.

Since $\alpha_t(f) = f \circ \Phi_t$, we have 
    \begin{align*} D(\alpha_t(f))(p,q)  = & \, D(f)(\Phi_t(p,q)) \circ D\Phi_t(p,q) \\
    = & \, \frac{\partial f}{\partial q}(\Phi_t(p,q)) \circ \pi_{\textup{pos}} \circ D \Phi_t(p,q) \\
    &+ \frac{\partial f}{\partial p}(\Phi_t(p,q)) \circ \pi_{\textup{mom}} \circ D \Phi_t
(p,q)    \end{align*}
This leads us to define the functions
    \begin{align*} q(t) & := \pi_\pos\circ\Phi_t: \Omega_\Lambda \rightarrow \Omega_\Lambda^\pos \\
    p(t) &:= \pi_\mom \circ \Phi_t: \Omega_\Lambda \rightarrow \Omega_\Lambda^\mom \\
    q_k(t) &:= \pi_{\pos,k} \circ \Phi_t : \Omega_\Lambda \rightarrow \Omega_{\{k\}}^\pos \\
    p_k(t)& := \pi_{\mom,k} \circ \Phi_t:\Omega_\Lambda \rightarrow \Omega_{\{k\}}^\mom
    \end{align*}
mapping the initial conditions to various quantities after a time evolution by time $t$. Since projections are linear, we calculate
    \begin{align*} \frac{\partial q(t)}{\partial q}(p,q) & = \pi_{\textup{pos}} \circ D \Phi_t(p,q) \circ \pi_{\textup{pos}}^*  \\
     \frac{\partial q(t)}{\partial p}(p,q) & := \pi_{\textup{pos}} \circ D \Phi_t(p,q) \circ \pi_{\textup{mom}}^* 
     \end{align*}
and similarly for $\frac{\partial p(t)}{\partial q}(p,q)$ and $\frac{\partial p(t)}{\partial p}(p,q)$. We therefore see that
\begin{align*} \frac{\partial \alpha_t(f)}{\partial q}(p,q) & = \frac{\partial f}{\partial q}(\Phi_t(p,q)) \circ \frac{\partial q(t)}{\partial q}(p,q) + \frac{\partial f}{\partial p}(\Phi_t(p,q)) \circ \frac{\partial p(t)}{\partial q}(p,q) \\
\frac{\partial \alpha_t(f)}{\partial p}(p,q) & = \frac{\partial f}{\partial q}(\Phi_t(p,q)) \circ \frac{\partial q(t)}{\partial p}(p,q) + \frac{\partial f}{\partial p}(\Phi_t(p,q)) \circ \frac{\partial p(t)}{\partial p}(p,q) 
\end{align*}
Summarizing and using the fact that $(\varphi\circ A)^T \cdot v =  \varphi^T \cdot Av$ for a linear functional $\varphi:W \rightarrow \R$ and linear map $A: V \rightarrow W$, we obtain
\begin{align*}
    \{\alpha_t(f),g\}(p,q) = {}& \left(\frac{\partial f}{\partial q}(\Phi_t(p,q))\right)^T  \cdot  \frac{\partial q(t)}{\partial q}(p,q)\left(\frac{\partial g}{\partial p}(p,q) \right)^T \\
    & +  \left(\frac{\partial f}{\partial p}(\Phi_t(p,q))\right)^T  \cdot \frac{\partial p(t)}{\partial q}(p,q) \left(\frac{\partial g}{\partial p}(p,q) \right)^T \\
    &- \left(\frac{\partial f}{\partial q}(\Phi_t(p,q))\right)^T\cdot  \frac{\partial q(t)}{\partial p}(p,q)\left(\frac{\partial g}{\partial q}(p,q) \right)^T \\
    & - \left(\frac{\partial f}{\partial p}(\Phi_t(p,q)) \right)^T\cdot \frac{\partial p(t)}{\partial p}(p,q) \left(\frac{\partial g}{\partial q}(p,q) \right)^T
\end{align*}
Recall that $f$ only depends on the grid sites in $X$ and $g$ on the grid sites in $Y$, by which we mean that $f = f_0\circ\pi_{X,\Lambda}$ and $g = g_0\circ\pi_{Y,\Lambda}$. This means that $\frac{\partial f}{\partial q_j} = \frac{\partial f}{\partial p_j} = 0$ for $j \in \Lambda \setminus X$, as well as $\frac{\partial g}{\partial q_j} = \frac{\partial g}{\partial p_j} = 0$ for $j \in \Lambda \setminus Y$. We therefore see that
\begin{align*}
    \{\alpha_t(f),g\}(p,q) = {}& \sum_{j\in X} \sum_{k \in Y} \left(\frac{\partial f}{\partial q_j}(\Phi_t(p,q)) \right)^T \cdot \frac{\partial q_j(t)}{\partial q_k}(p,q)  \left(\frac{\partial g}{\partial p_k}(p,q) \right)^T \\
    & + \left(\frac{\partial f}{\partial p_j}(\Phi_t(p,q))\right)^T \cdot \frac{\partial p_j(t)}{\partial q_k}(p,q) \left(\frac{\partial g}{\partial p_k}(p,q) \right)^T \\
    &- \left(\frac{\partial f}{\partial q_j}(\Phi_t(p,q))\right)^T \cdot \frac{\partial q_j(t)}{\partial p_k}(p,q)  \left(\frac{\partial g}{\partial q_k}(p,q) \right)^T \\
    & - \left(\frac{\partial f}{\partial p_j}(\Phi_t(p,q))\right)^T \cdot \frac{\partial p_j(t)}{\partial p_k}(p,q)\left(\frac{\partial g}{\partial q_k}(p,q) \right)^T
\end{align*}

Because $(\frac{\partial f}{\partial q_j}(p,q))^T$ corresponds to projecting $\nabla f(p,q)$ onto $\Omega_{\{j\}}^{\textup{pos}}$, we have that $\|\frac{\partial f}{\partial q_j}(p,q)\|_2 \leq \|\nabla f(p,q)\|_2$ for all $j \in \Lambda$.  Using that for all $(p,q) \in \Omega_\Lambda$ we have $\|\nabla f(\Phi_t(p,q))\|_2 \leq \|\nabla f\|_{2,\infty}$ and $\|\nabla g(p,q)\|_2 \leq \|\nabla g\|_{2,\infty}$ we then estimate
\begin{align*} & \|\{\alpha_t(f),g\}\|_\infty \\ & \leq \|f\|_{C^1}\|g\|_{C^1} \sum_{\substack{j \in X \\ k \in Y}} \left( \left\| \frac{\partial q_j(t)}{\partial q_k}\right\|_{\op,\infty} + \left\| \frac{\partial p_j(t)}{\partial q_k}\right\|_{\op,\infty} +\left\| \frac{\partial q_j(t)}{\partial p_k}\right\|_{\op,\infty} + \left\| \frac{\partial p_j(t)}{\partial p_k}\right\|_{\op,\infty} \right) \\
& \leq 4\|f\|_{C^1}\|g\|_{C^1} \sup_{\substack{j \in X \\ k \in Y}} \left\{ \left\| \frac{\partial q_j(t)}{\partial q_k}\right\|_{\op,\infty}, \left\| \frac{\partial p_j(t)}{\partial q_k}\right\|_{\op,\infty}, \left\| \frac{\partial q_j(t)}{\partial p_k}\right\|_{\op,\infty}, \left\| \frac{\partial p_j(t)}{\partial p_k}\right\|_{\op,\infty} \right\},
\end{align*}
where for a function $F: \Omega_\Lambda \rightarrow L(V,W)$ with values in the set of linear operators between two vector spaces $V$ and $W$ we write
    \[ \|F\|_{\op,\infty} = \sup_{(p,q)\in \Omega_\Lambda} \|F(p,q)\|_{\op} \]

\ignore{ We denote by $X_{kj}(t)_{:,i}$  the $i$-th column vector of $X_{kj}(t)$, which corresponds to the derivative of all components of $q_k(t)$ with respect to the scalar initial coordinate $q_{j,i}$.
Therefore, the derivatives of $\alpha_t(f)$ can be expressed as
$$
\frac{\partial \alpha_t(f)}{\partial q_{j,i}} 
= \sum_{k \in X} \left(
\nabla_{q_k} f(q(t),p(t)) \cdot X_{kj}(t)_{:,i} 
+ \nabla_{p_k} f(q(t),p(t)) \cdot Y_{kj}(t)_{:,i}
\right),
$$
and
$$
\frac{\partial \alpha_t(f)}{\partial p_{j,i}} 
= \sum_{k \in X} \left(
\nabla_{q_k} f(q(t),p(t)) \cdot Z_{kj}(t)_{:,i} 
+ \nabla_{p_k} f(q(t),p(t)) \cdot W_{kj}(t)_{:,i}
\right).
$$
Substitute back into the Poisson bracket yields
\begin{align*}
\{\alpha_t(f), g\}(p,q) 
&= \sum_{j \in Y} \sum_{i=1}^d \sum_{k \in X} \Big[
\left( \nabla_{q_k} f(q(t),p(t)) \cdot X_{kj}(t)_{:,i} + \nabla_{p_k} f(q(t),p(t)) \cdot Y_{kj}(t)_{:,i} \right) \cdot \frac{\partial g}{\partial p_{j,i}} \\
&\quad - \left( \nabla_{q_k} f(q(t),p(t)) \cdot Z_{kj}(t)_{:,i} + \nabla_{p_k} f(q(t),p(t)) \cdot W_{kj}(t)_{:,i} \right) \cdot \frac{\partial g}{\partial q_{j,i}} 
\Big].
\end{align*}
Taking absolute values and applying the triangle inequality to the above equation, gives
\begin{align*}
&|\{\alpha_t(f), g\}(p,q)| 
\\&\leq \sum_{j \in Y} \sum_{i=1}^d \sum_{k \in X} \Big(
|\nabla_{q_k} f(q(t),p(t)) \cdot X_{kj}(t)_{:,i}| \cdot \left|\frac{\partial g}{\partial p_{j,i}}\right| 
+ |\nabla_{p_k} f(q(t),p(t)) \cdot Y_{kj}(t)_{:,i}| \cdot \left|\frac{\partial g}{\partial p_{j,i}}\right| \\
&\quad + |\nabla_{q_k} f(q(t),p(t)) \cdot Z_{kj}(t)_{:,i}| \cdot \left|\frac{\partial g}{\partial q_{j,i}}\right| 
+ |\nabla_{p_k} f(q(t),p(t)) \cdot W_{kj}(t)_{:,i}| \cdot \left|\frac{\partial g}{\partial q_{j,i}}\right|
\Big).
\end{align*}
Note that each term such as $\nabla_{q_k} f \cdot X_{kj}(t)_{:,i}$ is a dot product of two $d$-dimensional vectors: $\nabla_{q_k} f$ and the $i$-th column of the matrix $X_{kj}(t)$. By Cauchy--Schwarz inequality,
$$
\sum_{i=1}^d|\nabla_{q_k} f(q(t),p(t)) \cdot X_{kj}(t)_{:,i}| \leq \|\nabla_{q_k} f\| \cdot \sum_{i=1}^d\|X_{kj}(t)_{:,i}\| \leq d\|\nabla_{q_k} f\| \cdot \|X_{kj}(t)\|,
$$
where $\|X_{kj}(t)\|$ denotes the operator norm of the matrix $X_{kj}(t)$. Similarly, the derivatives of $g$ satisfy
$$
\left|\frac{\partial g}{\partial p_{j,i}}\right| \leq \|\nabla_{p_j} g\| \leq \|g\|_{C^1},
$$
and analogously for derivatives with respect to $q_{j,i}$.% Since the sum over $i=1,\ldots,d$ is a finite sum over vector components, it can be bounded by
%$$
%\sum_{i=1}^d \|X_{kj}(t)_{:,i}\| \leq d \cdot \|X_{kj}(t)\|,
%$$
Similar esitmates hold for $Y_{kj}(t)$, $Z_{kj}(t)$, and $W_{kj}(t)$. Therefore, we obtain
$$
|\{\alpha_t(f), g\}(p,q)| \leq d\|f\|_{C^1} \|g\|_{C^1} \sum_{j \in Y} \sum_{k \in X} \bigl(
\|X_{kj}(t)\| + \|Y_{kj}(t)\| + \|Z_{kj}(t)\| + \|W_{kj}(t)\|
\bigr).
$$
Hence
\begin{align*}
|\{\alpha_t(f), g\}(p,q)| 
&\leq 4d\|f\|_{C^1} \|g\|_{C^1}\sum_{j \in Y} \sum_{k \in X}\max{\{
\|X_{kj}(t)\|, \|Y_{kj}(t)\|, \|Z_{kj}(t)\|, \|W_{kj}(t)\|\}}.
\end{align*}
} %End Ignore
\begin{remark}
    Note that whenever $X$ and $Y$ are disjoint, their are no diagonal terms. This will play a role in the next section.
\end{remark}

\subsection*{Variational methods}
Now that we have separated the dependence on the functions $f$ and $g$, we focus on bounding the derivatives of the time evolution with respect to the initial data. In order to clean up notation, we pick an orbit of the time evolution $(P(t),Q(t)):= \Phi_t(P(0),Q(0))$ for some fixed initial point $(P(0),Q(0)) 
\in \Omega_\Lambda$ (we write uppercase letters to distinguish for example $Q(t)$, which for all $t \in \R$ is a vector in $\Omega_{\Lambda}^\pos$, from $q(t)$, which for all $t \in \R$ is a function $\Omega_\Lambda\rightarrow \Omega_\Lambda^\pos$ describing how the position at time $t$ depends on the initial value; in other words, $Q(t) = q(t)(P(0),Q(0))$).
%
%As proved in \cite[Thm. 13]{Nuland_Ven_2023}, the pullback of the Hamiltonian flow $\Phi_t$ preserves $\CR(\Omega_\Lambda)$; this result will be needed in Section \ref{Section: existence}. 
We write the following for the four Jacobian matrices
\begin{IEEEeqnarray*}{rClrCl}
    X_{kj}(t) &:= & \frac{\partial q_k(t)}{\partial q_j}(P(0),Q(0)); \quad \quad&
    Z_{kj}(t) &:= &\frac{\partial q_k(t)}{\partial p_j}(P(0),Q(0)); \\
    Y_{kj}(t) &:= &\frac{\partial p_k(t)}{\partial q_j}(P(0),Q(0)); \quad \quad &
    W_{kj}(t) &:= & \frac{\partial p_k(t)}{\partial p_j}(P(0),Q(0));
\end{IEEEeqnarray*}
where we remind the reader that for each $k,j \in \Lambda$ these are each a $d \times d$ matrix.

Recall the Hamiltonian of the system:
$$
H_\Lambda(p,q) = \sum_{k \in \Lambda} \left( \frac{\|p_k\|^2}{2 m_k} + \frac{\nu_k}{2} \|q_k\|^2 \right) + \frac{1}{2}\sum_{k,l \in \Lambda} V_{kl}(q_k - q_l),
$$
where the potentials \(V_{kl}\) satisfy Assumption \ref{conditions}. The evolution $\Phi_t(p(0),q(0)):=(q(t), p(t))$ is given by solving Hamilton's equations, which are typically nonlinear due to the interaction potentials $V_{kl}$. The linearized dynamics come from differentiating the nonlinear flow with respect to initial conditions. 
To analyze these, we consider the Hessian matrix of the potential energy evaluated along the trajectory $(q(t), p(t))$. So for $h: V \rightarrow \R$ differentiable such that $\nabla h$ is differentiable, we write
    \[ \Hess(h)(v):= D(\nabla h)(v) \in L(V,V).  \]
for the Hessian at the point $v \in V$. We also want to formulate mixed second partial derivatives with respect to the vector variables $q_k$, so for $U: \Omega_\Lambda^{\textup{pos}} \rightarrow \R$ we define
    \[ \frac{\partial^2 U}{\partial q_k \partial q_j}(q) :=  D\left(\left(\frac{\partial U}{\partial q_j} \right)^T\right) (q)\circ \pi_{\textup{pos},k}^* \in L(\Omega_{\{k\}}^{\textup{pos}},\Omega_{\{j\}}^{\textup{pos}}).\]
However, because
    \[ \left( \frac{\partial U}{\partial q_j}\right)^T(q) = (DU(q) \circ\pi_{\pos,j}^*)^T = \pi_{\pos,j}(DU(q))^T \]
we have
    \[ \frac{\partial^2 U}{\partial q_k \partial q_j}(q) = \pi_{\pos,j} \circ \Hess(U)(q) \circ \pi_{\pos,k}^* \]

\begin{lemma}\label{Lemm: Hessian}
Let $Q(t)$ be a path in $\Omega_\Lambda^\pos$. The  Hessian matrix of the potential energy part of the Hamiltonian
$$
B(t) := \left[ B_{kj}(t) \right]_{j,k \in \Lambda}, \quad \text{where} \quad B_{kj}(t) := \frac{\partial^2 U}{\partial q_k \partial q_j}(Q(t))
$$
with 
$$
U(q) := \sum_{k \in \Lambda} \frac{\nu_k}{2} \|q_k\|^2 + \frac{1}{2}\sum_{k,l \in \Lambda} V_{kl}(q_k - q_l)
$$
has the following block structure
\begin{align*}
B_{jj}(t) & = \nu_j  + \sum_{l \in \Lambda} \Hess( V_{jl} )(Q_j(t) - Q_l(t)) \\
B_{kj}(t) & = - \Hess( V_{kj})(Q_k(t) - Q_j(t)) \quad (j \neq k).
\end{align*}
In particular, we have $B_{kj}(t) = B_{jk}(t) = B_{jk}(t)^T$.
\end{lemma}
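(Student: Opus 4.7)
The plan is to run this as a direct computation, split along the natural decomposition $U(q) = U_{\mathrm{harm}}(q)+U_{\mathrm{int}}(q)$, where $U_{\mathrm{harm}}(q)=\sum_{k\in\Lambda}\tfrac{\nu_k}{2}\|q_k\|^2$ and $U_{\mathrm{int}}(q)=\tfrac{1}{2}\sum_{k,l\in\Lambda}V_{kl}(q_k-q_l)$. The Hessian of $\tfrac{\nu_k}{2}\|q_k\|^2$ is $\nu_k$ times the identity on the block $q_k$ and zero otherwise, so $U_{\mathrm{harm}}$ contributes only to the diagonal block, delivering the $\nu_j$ term in $B_{jj}$ (with the implicit convention that the scalar $\nu_j$ in the lemma is identified with $\nu_j I_d$ on $\Omega_{\{j\}}^{\mathrm{pos}}$).

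For the interaction term, first I would compute the single gradient $\partial U_{\mathrm{int}}/\partial q_j$. A summand $V_{kl}(q_k-q_l)$ depends on $q_j$ exactly when $k=j$ or $l=j$, contributing $\nabla V_{jl}(q_j-q_l)$ or $-\nabla V_{kj}(q_k-q_j)$, respectively. Assumption \ref{conditions}(i), $V_{kl}(x)=V_{lk}(-x)$, differentiated once yields $\nabla V_{kj}(x)=-\nabla V_{jk}(-x)$, so the two partial sums coincide after relabeling, the prefactor $\tfrac{1}{2}$ is absorbed by the pairwise double counting, and (using $V_{jj}=0$ to include the $l=j$ term trivially)
$$\frac{\partial U_{\mathrm{int}}}{\partial q_j}(q)=\sum_{l\in\Lambda}\nabla V_{jl}(q_j-q_l).$$
Differentiating once more with respect to $q_k$, each summand $\nabla V_{jl}(q_j-q_l)$ depends on $q_k$ only when $k=j$ or $k=l$. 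If $k=j$, every $l$ survives and contributes $\Hess(V_{jl})(q_j-q_l)$, which together with the harmonic piece reproduces the stated formula for $B_{jj}$. If $k\neq j$, only the term $l=k$ contributes, giving $-\Hess(V_{jk})(q_j-q_k)$, and applying Assumption \ref{conditions}(i) twice differentiated in the form $\Hess(V_{jk})(x)=\Hess(V_{kj})(-x)$ this equals $-\Hess(V_{kj})(q_k-q_j)$, matching the stated off-diagonal block exactly.

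For the last sentence of the lemma, the equality $B_{kj}(t)=B_{jk}(t)^T$ follows from Schwarz's theorem on mixed partials, which is legitimate because $V_{kl}\in C^\infty$ by Assumption \ref{conditions}(ii); the equality $B_{kj}(t)=B_{jk}(t)$ is then immediate from the explicit formulas, since Hessians of $C^2$ functions are self-transpose. There is no real obstacle here: the proof is essentially bookkeeping, and the only point deserving care is tracking how Assumption \ref{conditions}(i) simultaneously absorbs the prefactor $\tfrac{1}{2}$ in passing to the gradient and later allows one to swap the roles of $(j,k)$ inside the argument of $V$; without this symmetry the off-diagonal block would be stated in terms of $V_{jk}$ rather than $V_{kj}$.
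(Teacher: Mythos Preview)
Your proof is correct and follows essentially the same route as the paper: compute $\partial U/\partial q_j = \nu_j q_j + \sum_{l\in\Lambda}\nabla V_{jl}(q_j-q_l)$ using Assumption~\ref{conditions}(i) to absorb the factor $\tfrac12$, then differentiate once more in $q_k$ to read off the diagonal and off-diagonal blocks, and finish with the symmetry of the Hessian. The only cosmetic difference is that the paper carries this out in its coordinate-free notation via $D$ and the projections $\pi_{\pos,k}^*$ rather than splitting off $U_{\mathrm{harm}}$ first, but the computation is the same.
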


\begin{proof}
We first calculate the first partial derivatives of $U$: the total derivative equals
    \begin{align*}
        DU(q) (v) & = \sum_{k \in \Lambda} \nu_k  q_k\cdot \pi_{\pos,k}v \\
        & \hphantom{ABCDE} + \frac{1}{2} \sum_{k,l \in \Lambda} D(V_{kl})(q_k - q_l) \circ D(\pi_{\pos,k} - \pi_{\pos,l})(v) \\
        & = \sum_{k\in \Lambda} \nu_k q_k \cdot \pi_{\pos,k}v + \frac{1}{2}\sum_{k,l\in\Lambda} D(V_{kl})(q_k - q_l)(\pi_{\pos,k}v - \pi_{\pos,l}v)
    \end{align*}
so that  
\begin{align*}
    DU(q) \circ \pi_{\pos,j}^*(v) = \nu_jq_j \cdot v  + \frac{1}{2} \sum_{k,l \in \Lambda}(\delta_{k,j} - \delta_{l,j})D(V_{kl})(q_k - q_l
)(v) \end{align*}
meaning that
\begin{align*} \left(\frac{\partial U}{\partial q_j}(q)\right)^T & = \nu_jq_j + \frac{1}{2}  \sum_{l \in \Lambda} \nabla V_{jl}(q_j - q_l) - \nabla V_{lj}(q_l - q_j) \\
&= \nu_jq_j + \sum_{l \in \Lambda}\nabla V_{jl}(q_j - q_l)
\end{align*}
Calculating the full derivative of this, we see that
\begin{align*} 
D\left( \left( \frac{\partial U}{\partial q_j}\right)^T\right) (q)& = \nu_j \pi_{\pos,j} + \sum_{l \in \Lambda} D(\nabla V_{jl} )(q_j - q_l) \circ D(\pi_{\pos,j} - \pi_{\pos,l}) \\
&= \nu_j \pi_{\pos,j} + \sum_{l \in \Lambda} \Hess(V_{jl})(q_j - q_l)\circ(\pi_{\pos,j} - \pi_{\pos,l}).
\end{align*}

\textbf{Diagonal blocks:} For $j = k$, we compute
\begin{align*}
B_{jj}(t) & = \frac{\partial^2 U}{\partial q_j^2}(Q(t)) \\
& = D\left( \left( \frac{\partial U}{\partial q_j}\right)^T\right) (Q(t)) \circ \pi_{\pos,j}^* \\
& = \nu_j I_d + \sum_{l \in \Lambda} \Hess(V_{jl})(Q_j(t) - Q_l(t))
\end{align*}
where in the last step we used that $\pi_{\pos,l} \circ \pi_{\pos,j}^* = \delta_{l,j} I_d$ and $V_{jj} = 0$.

%Begin old version -----------
\ignore{
Break $U(q)$ as
$$
U(q) = \sum_{k \in \Lambda} \frac{\nu_k}{2} \|q_k\|^2 + \frac{1}{2}\sum_{k,l \in \Lambda} V_{kl}(q_k - q_l).
$$
The second derivative w.r.t. $q_j$ on the first sum yields
$$
\frac{\partial^2}{\partial q_j^2} \sum_{k} \frac{\nu_k}{2} \|q_k\|^2 = \nu_j I_d,
$$
since terms with $k \neq j$ vanish.
For the interaction term,
$$
\frac{\partial^2}{\partial q_j^2} \sum_{k,l\in\Lambda} V_{kl}(q_k - q_l) = \sum_{k,l\in\Lambda} \frac{\partial^2 }{\partial q_j^2}V_{kl}(q_k - q_l).
$$
Only terms with $k=j$ or $l=j$ contribute derivatives with respect to $q_j$. Note that
$$
\frac{\partial}{\partial q_j} V_{kl}(q_k - q_l) = 
\begin{cases}
\nabla V_{kl}(q_j - q_l), & k = j, \\
-\nabla V_{kl}(q_k - q_j), & l = j, \\
0, & \text{otherwise}.
\end{cases}
$$
Therefore,
$$
\frac{\partial^2 V_{kl}}{\partial q_j^2} = 
\begin{cases}
\nabla^2 V_{jl}(q_j - q_l), & k = j, \\
\nabla^2 V_{kj}(q_k - q_j), & l = j,
\end{cases}
$$
where we use the fact that $\frac{\partial}{\partial q_j}$ acts on either argument depending on whether $j=k$ or $j=l$. Summing over all $l$, we get
$$
\frac{\partial^2}{\partial q_j^2} \sum_{k,l\in\Lambda} V_{kl}(q_k - q_l)=
\sum_{l \in \Lambda} \nabla^2 V_{jl}(q_j - q_l) + \sum_{k \in \Lambda} \nabla^2 V_{kj}(q_k - q_j).
$$
Since $V_{kl}(x) = V_{lk}(-x)$, the Hessians satisfy
$$
\nabla^2 V_{kj}(q_k - q_j) = \nabla^2 V_{jk}(q_j - q_k).
$$
Hence, the two sums are equal, and they combine into twice the sum over $l$:
$$
2 \sum_{l \in \Lambda} \nabla^2 V_{jl}(q_j - q_l).
$$
However, in the Hamiltonian, the interaction term has a prefactor $\frac{1}{2}$ to avoid double counting.
Accounting for this factor, the second derivative on the diagonal reduces to
$$
\sum_{l \in \Lambda} \nabla^2 V_{jl}(q_j - q_l).
$$
Therefore,
$$
B_{jj}(t) = \nu_j + \sum_{l \in \Lambda} \nabla^2 V_{jl}(q_j(t) - q_l(t)).
$$
}
%End old version --------------------

\textbf{Off-diagonal blocks:} For $j \neq k$, we calculate
\begin{align*}
B_{kj}(t) & = \frac{\partial^2 U}{\partial q_k \partial q_j}(Q(t)) \\
& = D\left( \left( \frac{\partial U}{\partial q_j}\right)^T\right) (Q(t)) \circ \pi_{\pos,k}^* \\
& = - \Hess(V_{jk})(Q_j(t) - Q_k(t)) \\
& = - \Hess(V_{kj})(Q_k(t) - Q_j(t))
\end{align*}
where in the last step we used that $V_{jk}(x) = V_{kj}(-x)$, so $\nabla V_{jk}(x)= -\nabla V_{kj}(-x)$ and $\Hess(V_{jk})(x) = \Hess(V_{kj})(-x)$. This also shows that $B_{kj}(t) = B_{jk}(t)$, and since the Hessian of a smooth function is symmetric, we indeed see that $B_{kj}(t) = B_{jk}(t)^T$, as the identity $\frac{\partial^2 U}{\partial q_k \partial q_j}(q) = \pi_{\pos,j} \circ \Hess(U)(q) \circ \pi_{\pos,k}^*$ suggests.
\end{proof}

Next, we again consider the Jacobian blocks describing the sensitivity of the flow with respect to initial conditions. The position and momentum satisfy the Hamiltonian differential equations, and as we show in the next Lemma, because everything is sufficiently smooth, we can permute the time derivative and the derivative in the initial data to arrive at differential equations for the derivatives of position and momentum with respect to the initial data.
% Introduce the diagonal mass and stiffness matrices:
% $$
% M := \mathrm{diag}(m_j I_d), \quad N := \mathrm{diag}(\nu_j I_d).
% $$
% The linearized system satisfies
% $$
% \begin{cases}
% \dot{X}(t) = M^{-1} Y(t), \\
% \dot{Y}(t) =  -B(t) X(t), \\
% \dot{Z}(t) = M^{-1} W(t), \\
% \dot{W}(t) =  -B(t) Z(t),
% \end{cases}
% $$
% with initial conditions
% $$
% X(0) = I, \quad Y(0) = 0, \quad Z(0) = 0, \quad W(0) = I.
% $$
% This is proved in the following lemma.
\begin{lemma}\label{lem:variational_jacobian}
Let $(P(t), Q(t))$ be solutions of the Hamiltonian system
$$
H(q,p) = \sum_{j \in \Lambda} \frac{1}{2m_j} \|p_j\|^2 + \sum_{j \in \Lambda} \frac{\nu_j}{2} \|q_j\|^2 + \frac{1}{2} \sum_{j,k \in \Lambda} V_{jk}(q_j - q_k).
$$
For each $k,j \in \Lambda$ we define
\begin{IEEEeqnarray*}{rClrCl}
    X_{kj}(t) &:= & \frac{\partial q_k(t)}{\partial q_j}(P(0),Q(0)); \quad \quad&
    Z_{kj}(t) &:= &\frac{\partial q_k(t)}{\partial p_j}(P(0),Q(0)); \\
    Y_{kj}(t) &:= &\frac{\partial p_k(t)}{\partial q_j}(P(0),Q(0)); \quad \quad &
    W_{kj}(t) &:= & \frac{\partial p_k(t)}{\partial p_j}(P(0),Q(0));
\end{IEEEeqnarray*}
Then the variational system satisfies
$$
\begin{cases}
\dot{X}_{kj}(t) = \frac{1}{m_k}Y_{kj}(t), \\
\dot{Y}_{kj}(t) =  -\sum_{l \in \Lambda} B_{lk}(t) X_{lj}(t), \\
\dot{Z}_{kj}(t) = \frac{1}{m_k} W_{kj}(t), \\
\dot{W}_{kj}(t) =  -\sum_{l\in\Lambda}B_{lk}(t) Z_{lj}(t),
\end{cases}
\quad
\text{with} \quad \begin{cases}
X_{kj}(0) = \delta_{kj}I_d \\
Y_{kj}(0) = 0 \\
Z_{kj}(0) = 0 \\
W_{kj}(0) = \delta_{kj}I_d
\end{cases}
$$
where $B_{kj}(t)$ is the Hessian of the potential part of $H$ with respect to positions $k$ and $j$, i.e.
$$
B_{kj}(t)= \frac{\partial^2 U}{\partial q_k \partial q_j} (Q(t)),
$$
whose matrix elements are explicitly given by the formulas obtained from Lemma \ref{Lemm: Hessian}.
\end{lemma}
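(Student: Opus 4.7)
The plan is to obtain the variational system by differentiating Hamilton's equations with respect to the initial data and then reorganizing the result in terms of the Hessian matrix $B(t)$ constructed in Lemma~\ref{Lemm: Hessian}. Since Assumption~\ref{conditions} makes the Hamiltonian vector field $C^\infty$ with globally bounded derivatives of all orders on $\Omega_\Lambda$, the standard smooth-dependence-on-initial-conditions theorem ensures that $\Phi_t$ is $C^\infty$ jointly in $t$ and in $(P(0),Q(0))$, so that the partial derivatives $X_{kj}$, $Y_{kj}$, $Z_{kj}$, $W_{kj}$ are well-defined $C^1$ functions of $t$ and $\partial_t$ commutes with $\partial/\partial q_{j,i}(0)$ and $\partial/\partial p_{j,i}(0)$.

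With this interchange justified, I would first write Hamilton's equations as
\[
\dot Q_k(t) = \tfrac{1}{m_k}P_k(t), \qquad \dot P_k(t) = -\nu_k Q_k(t) - \sum_{l\in\Lambda}\nabla V_{kl}\bigl(Q_k(t)-Q_l(t)\bigr),
\]
the second identity being exactly the gradient already computed in the proof of Lemma~\ref{Lemm: Hessian} (using the symmetry $V_{kl}(x)=V_{lk}(-x)$ to collapse the double sum). Differentiating the first equation with respect to $q_j$ and $p_j$ gives immediately $\dot X_{kj}=\tfrac{1}{m_k}Y_{kj}$ and $\dot Z_{kj}=\tfrac{1}{m_k}W_{kj}$, with no further work.

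The substantive step is the $\dot P_k$ equation. Applying the chain rule and viewing $\Hess(V_{kl})(Q_k(t)-Q_l(t))$ as a linear operator on $\R^d$, I obtain
\[
\dot Y_{kj}(t) = -\nu_k X_{kj}(t) - \sum_{l\in\Lambda}\Hess(V_{kl})(Q_k(t)-Q_l(t))\bigl(X_{kj}(t)-X_{lj}(t)\bigr),
\]
and analogously for $\dot W_{kj}$ with $X$ replaced by $Z$. I would then identify the right-hand side with $-\sum_{l\in\Lambda}B_{lk}(t)X_{lj}(t)$ by splitting the sum into the diagonal term $l=k$ and the off-diagonal terms $l\neq k$, inserting the explicit blocks from Lemma~\ref{Lemm: Hessian} (recalling $V_{kk}=0$, so $\Hess(V_{kk})\equiv 0$), and using the symmetry $B_{lk}=B_{kl}=B_{kl}^T$ also established there. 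A short rearrangement shows that the diagonal $B_{kk}X_{kj}$ supplies the $-\nu_k X_{kj}$ contribution together with the $X_{kj}$-part of the sum, while the off-diagonal $B_{lk}X_{lj}$ for $l\neq k$ supplies the $-X_{lj}$ part, matching the chain-rule expression exactly.

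The initial conditions are immediate since $\Phi_0 = \mathrm{id}$, so $\partial q_k(0)/\partial q_j = \delta_{kj}I_d$, $\partial q_k(0)/\partial p_j = 0$, $\partial p_k(0)/\partial q_j = 0$, and $\partial p_k(0)/\partial p_j = \delta_{kj}I_d$. The only genuine obstacle is the bookkeeping in the last step: one must carefully track the transposition conventions when passing from $\Hess(V_{kl})(Q_k-Q_l)(X_{kj}-X_{lj})$ to the entries of $B_{lk}$ acting on $X_{lj}$, but this is exactly what the symmetry $B_{lk}=B_{kl}^T$ and the identity $\Hess(V_{kl})(x)=\Hess(V_{lk})(-x)$ from Lemma~\ref{Lemm: Hessian} make automatic.
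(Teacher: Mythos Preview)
Your proposal is correct and follows essentially the same approach as the paper: justify the interchange of $\partial_t$ with derivatives in the initial data via smooth dependence on initial conditions, differentiate Hamilton's equations, and identify the resulting expression with the Hessian blocks $B_{lk}$ from Lemma~\ref{Lemm: Hessian}. The only cosmetic difference is that the paper computes the second partials of $H$ in abstract form first and then specializes, whereas you write out Hamilton's equations explicitly and differentiate; your diagonal/off-diagonal identification is exactly the content of the paper's final reduction.
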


\begin{proof}
The initial values are easy to check because $\Phi_0$ is the identity, so
    \[ \frac{\partial q_k(0)}{\partial q_j} = \pi_{\pos,k}\circ D(\Phi_0) \circ \pi_{\pos,j}^* = \pi_{\pos,k} \circ \pi_{\pos,j}^* = \delta_{k,j} I_d  \]
and similar for the rest. 

To verify the system of differential equations, we define $\Phi:\R\times\Omega_\Lambda \rightarrow \Omega_\Lambda$ by $\Phi(t,p,q) := \Phi_t(p,q)$. We are looking to express
    \begin{align*} \frac{d X_{k,j}}{dt}(t) & = \frac{d}{dt}\left( \frac{\partial q_k(\cdot)}{\partial q_j}(P(0),Q(0))\right)(t) \\
    & =  \frac{d}{dt}\left( \frac{\partial \pi_{\pos,k}\circ\Phi}{\partial q_j}(\cdot, P(0),Q(0)) \right)(t) \\
    & = \frac{\partial}{\partial t} \left(\frac{\partial \pi_{\pos,k}\circ\Phi}{\partial q_j} \right)(t,P(0),Q(0))
    \end{align*}
(here we have used the notation $\frac{\partial q_k(\cdot)}{\partial q_j}(P(0),Q(0))$ for the map $t \mapsto {\frac{\partial q_k(t)}{\partial q_j}(P(0),Q(0))} $ and similarly in the second line). Since 
    \[ \frac{\partial \pi_{\pos,k}\circ\Phi}{\partial q_j} : \R \times \Omega_\Lambda \rightarrow L(\Omega_{\{j\}}^\pos,\Omega_{\{k\}}^\pos )\]
is linear-operator valued function in finite dimensions, we can consider the partial derivative in a strong sense; it therefore suffices to evaluate the partial derivative of $\frac{\partial \pi_{\pos,k}\circ\Phi}{\partial q_{j,i}}$ for all $i = 1,\ldots, d$, which are $\frac{\partial \pi_{\pos,k}\circ\Phi}{\partial q_j}$ applied to the $i$'th unit vector in $\Omega_{\{j\}}^\pos$.

We note that $\frac{\partial H}{\partial p_{k,i}}: \Omega_\Lambda \rightarrow \R$ and $\frac{\partial H}{\partial q_{k,i}}:\Omega_\Lambda \rightarrow \R$ are smooth for all $k \in \Lambda$ and $i = 1, \ldots ,d$, so $\Phi$ is the solution to an ordinary differential equation whose time derivative at time $t$ is a smooth function of the value at time $t$; by standard results, this means that $\Phi$ is a smooth map (see e.g. \cite[Sec 32.4]{ArnoldODE}). Since this in particular means that it is $C^2$, we know that its derivatives commute, and therefore
    \begin{align*} \frac{\partial^2 \pi_{\pos,k} \circ \Phi}{\partial t\partial q_{j,i}}(t,P(0),Q(0)) & = \frac{\partial^2 \pi_{\pos,k} \circ \Phi}{\partial q_{j,i}\partial t}(t,P(0),Q(0)) \\
    & = \frac{\partial}{\partial q_{j,i}}\left( \frac{\partial \pi_{\pos,k} \circ \Phi}{\partial t} (t, \cdot,\cdot)\right)(P(0),Q(0))
    \end{align*}
(see e.g. \cite[Prop. C.6]{LeeManifolds}). By Hamilton's equations, we know that 
    \[ \frac{\partial Q_{k,i}}{\partial t}(t) = \frac{\partial H}{\partial p_{k,i}}(P(t),Q(t)) \quad \frac{\partial P_{k,i}}{\partial t}(t) = - \frac{\partial H}{\partial q_{k,i}}(P(t),Q(t)) \]
for all trajectories $(P(t),Q(t)) = \Phi_t(P(0),Q(0))$, $k \in \Lambda$ and $i = 1, \ldots d$. We can reformulate this as
    \[ \frac{\partial \pi_{\pos,k} \circ \Phi}{\partial t} (t, p,q) = \frac{d ( q_k(\cdot)(p,q))}{d t}(t) =  \left(\frac{\partial H}{\partial p_k}(\Phi_t(p,q)) \right)^T\]
meaning that
    \[ \frac{\partial^2 \pi_{\pos,k} \circ \Phi}{\partial t\partial q_{j,i}}(t,P(0),Q(0)) = \frac{\partial}{\partial q_{j,i}} \left( \left(\frac{\partial H}{\partial p_k}\right)^T \circ \Phi_t \right)(P(0),Q(0)). \]
So we conclude that
    \begin{align*} \frac{d X_{kj}}{dt}(t) & = \frac{\partial }{\partial q_j} \left( \left(\frac{\partial H}{\partial p_k}\right)^T \circ \Phi_t \right)(P(0),Q(0)) \\
    & = \sum_{l \in \Lambda}\frac{\partial^2 H}{\partial q_l \partial p_k}(P(t),Q(t)) \circ X_{lj}(t)  + \frac{\partial^2 H}{\partial p_l \partial p_k}(P(t),Q(t)) \circ Y_{lj}(t) 
    \end{align*}
By similar calculations we see that
\begin{align*}
\frac{d Y_{kj}}{dt}(t)
    & = -\sum_{l \in \Lambda}\frac{\partial^2 H}{\partial q_l \partial q_k}(P(t),Q(t)) \circ X_{lj}(t)  + \frac{\partial^2 H}{\partial p_l \partial q_k}(P(t),Q(t)) \circ Y_{lj}(t) \\
\frac{d Z_{kj}}{dt}(t)
    & = \sum_{l \in \Lambda}\frac{\partial^2 H}{\partial q_l \partial p_k}(P(t),Q(t)) \circ Z_{lj}(t)  + \frac{\partial^2 H}{\partial p_l \partial p_k}(P(t),Q(t)) \circ W_{lj}(t) \\
\frac{d W_{kj}}{dt}(t)
    & = -\sum_{l \in \Lambda}\frac{\partial^2 H}{\partial q_l \partial q_k}(P(t),Q(t)) \circ Z_{lj}(t)  + \frac{\partial^2 H}{\partial p_l \partial q_k}(P(t),Q(t)) \circ W_{lj}(t)
\end{align*}
Since our Hamiltonian doesn't contain any mixed position and momentum terms, the mixed position-momentum derivatives vanish. One easily calculates that
    \[ \left(\frac{\partial \|p_k\|^2}{\partial p_k}\right)^T = 2\pi_{\mom,k},\quad\textup{so} \quad  \frac{\partial^2 \|p_k\|^2}{\partial p_l \partial p_k} = 2\pi_{\mom,k} \circ\pi_{\mom,l}^*.\]
We can therefore reduce the system of differential equations above to
\begin{align*}
\frac{d X_{kj}}{dt}(t)
    & = \frac{1}{m_k} Y_{kj}(t) \\
\frac{d Y_{kj}}{dt}(t)
    & = -\sum_{l \in \Lambda}\frac{\partial^2 U}{\partial q_l \partial q_k}(P(t),Q(t)) \circ X_{lj}(t) \\
\frac{d Z_{kj}}{dt}(t)
    & = \frac{1}{m_k} W_{kj}(t) \\
\frac{d W_{kj}}{dt}(t)
    & = -\sum_{l \in \Lambda}\frac{\partial^2 U}{\partial q_l \partial q_k}(P(t),Q(t)) \circ Z_{lj}(t)
\end{align*}
where $U(p,q) = H(p,q) - \sum_{k\in \Lambda} \|p_k\|^2$ is the potential function. By Lemma \ref{Lemm: Hessian} the result now follows.
\end{proof}

\begin{lemma}\label{lemma: uniqueness}
    The solutions to the ODE's defined in Lemma \ref{lem:variational_jacobian} are all uniquely defined.
\end{lemma}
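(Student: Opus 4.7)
The plan is to recast the four equations of Lemma~\ref{lem:variational_jacobian} as a single finite-dimensional linear ODE with continuous, bounded coefficients, and then invoke the standard global existence and uniqueness theorem for linear systems.

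First, for each fixed $j \in \Lambda$, I would collect the block matrices $\{X_{kj}(t), Y_{kj}(t), Z_{kj}(t), W_{kj}(t)\}_{k \in \Lambda}$ into a single element $\mathbf{M}_j(t)$ of the finite-dimensional vector space $\bigoplus_{k \in \Lambda} L(\Omega_{\{j\}}^{\pos} \oplus \Omega_{\{j\}}^{\mom}, \Omega_{\{k\}}^{\pos} \oplus \Omega_{\{k\}}^{\mom})$. (Recall $\Lambda \Subset \Gamma$ is finite and each block is a $d \times d$ matrix, so this space has finite dimension.) The four ODEs then assemble into a single inhomogeneous-free linear system $\dot{\mathbf{M}}_j(t) = A(t)\mathbf{M}_j(t)$, where $A(t)$ is determined by the mass constants $\{m_k\}_{k \in \Lambda}$ (which are bounded away from $0$ and $\infty$ by Assumption~\ref{conditions}(iv)) and the Hessian matrix $B(t)$, with prescribed initial data $\mathbf{M}_j(0)$.

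Next, I would verify that $A(t)$ is continuous (in fact bounded) on all of $\R$. The only $t$-dependence sits in the entries of $B(t) = [B_{kj}(t)]_{k,j \in \Lambda}$ described in Lemma~\ref{Lemm: Hessian}. By Assumption~\ref{conditions}(ii) each $V_{kl} \in C_0^\infty(\R^d,\R)$, so $\Hess(V_{kl})$ is continuous and globally bounded (with bounds controlled uniformly by $C_{kl}C_V^2$ via Assumption~\ref{conditions}(iii)). Since the Hamiltonian flow $\Phi_t$ on $\Omega_\Lambda$ is continuous in $t$ (as already noted after Assumption~\ref{conditions}), the trajectory $Q(t) = \pi_{\pos}\circ\Phi_t(P(0),Q(0))$ is continuous in $t$, and composing with $\Hess(V_{kl})$ and summing over the finite set $\Lambda$ yields the continuity and boundedness of $t \mapsto B(t)$, hence of $t \mapsto A(t)$.

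Finally, the standard Picard--Lindel\"of theorem for linear ODEs with continuous coefficients on a finite-dimensional space (see, e.g., \cite[Ch.~2]{ArnoldODE}) yields a unique globally defined solution $\mathbf{M}_j:\R \to \bigoplus_k L(\cdots)$ satisfying the prescribed initial condition. Unpacking the blocks of $\mathbf{M}_j$ recovers the unique global solutions $X_{kj}, Y_{kj}, Z_{kj}, W_{kj}$ for each $k \in \Lambda$, and running this argument for every $j \in \Lambda$ gives uniqueness of the whole variational system. I do not anticipate any serious obstacle here; the content of the lemma is essentially the observation that linearity plus the continuity of $B(t)$ (guaranteed by the smoothness of the $V_{kl}$ and the continuity of the nonlinear flow) reduces matters to a routine application of linear ODE theory.
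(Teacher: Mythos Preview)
Your proposal is correct and takes essentially the same approach as the paper: both recast the variational system as a finite-dimensional linear non-autonomous ODE, verify that the coefficient matrix $B(t)$ is continuous and bounded using the smoothness of the $V_{kl}$ and the continuity of the flow $\Phi_t$, and then invoke the standard existence/uniqueness theorem for such systems. Your organization (collecting all blocks over $k \in \Lambda$ for fixed $j$ into one vector $\mathbf{M}_j$) is in fact slightly cleaner than the paper's, which stacks each pair $(k,j)$ separately and thus glosses over the coupling in $l$ coming from the sum $\sum_{l\in\Lambda} B_{lk}(t) X_{lj}(t)$.
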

\begin{proof}
Fix $\Lambda \Subset \Gamma$ finite. For each pair $j,k \in \Lambda$ consider the variational blocks
$$
X_{kj}(t), Y_{kj}(t), Z_{kj}(t), W_{kj}(t) \in \mathbb{R}^{d \times d}
$$
satisfying the system of ODEs as described in Lemma \ref{lem:variational_jacobian}.

By Assumption \eqref{conditions}, and the fact that the flow $\Phi(t)$ is continuous, the matrix coefficients 
$B_{kj}(t)$ are continuous in $t$.
In particular, 
$$
\| B_{kj}(t) \|_{\infty} \;\le\;
\begin{cases}
\nu_j + \displaystyle\sum_{l\in\Lambda} \| \Hess( V_{jl} )\|_{\op,\infty}, & j = k, \\[1em]
\| \Hess( V_{kj} )\|_{\op,\infty}, & j \neq k,
\end{cases}
$$
and the right-hand sides are finite by the decay assumptions on partial derivatives of $V_{kj}$ for all $k,j\in\Lambda$. 
Hence the map 
$t\mapsto B_{kj}(t)$ 
is bounded and continuous.

We now stack the unknowns into
$$
T_{kj}(t) = \big(X_{k j}(t), Y_{kj}(t), Z_{k j}(t), W_{k j}(t)\big)^\top.
$$
Then
$$
\dot T_{kj}(t) =A_{kj}(t) T_{kj}(t), \quad
A_{kj}(t) =
\begin{pmatrix}
0 & M_{kj}^{-1} & 0 & 0\\
-B_{kj}(t) & 0 & 0 & 0\\
0 & 0 & 0 & M_{kj}^{-1}\\
0 & 0 & -B_{kj}(t) & 0
\end{pmatrix},
$$
where $M_{kj}^{-1} = \delta_{kj}\text{diag}(m_j^{-1} I_d)$. By the observations above,
$A_{kj}(t)$ is continuous and bounded. By standard results, the linear non-autonomous  ODE
$$
\dot T_{kj} = A_{kj}(t) T_{kj}
$$
admits a unique global solution $T_{kj}(t)$. Hence, $X_{kj}(t), Y_{kj}(t), Z_{kj}(t), W_{kj}(t)$ are unique for all $t \in \mathbb{R}$. Since this holds for any pair $k,j$, we are done.
\end{proof}

\subsection{Second-order differential equations and integral formulas}
We study the evolution of the Jacobian block matrices defined above. \\\\
Consider $X_{kj}(t).$
From the equations of motion, we know that:
\[
\dot{X}_{kj}(t) = \frac{1}{m_k} Y_{kj}(t),
\]
where \( Y_{kj}(t) := \frac{\partial p_k(t)}{\partial q_j}(P(0),Q(0)) \in \mathbb{R}^{d \times d} \).
Differentiating both sides with respect to time gives:
\[
\ddot{X}_{kj}(t) = \frac{d}{dt} \dot{X}_{kj}(t) = \frac{1}{m_k} \dot{Y}_{kj}(t).
\]
From the linearized equation for \( \dot{Y}_{kj}(t) \), we have:
\[
\dot{Y}_{kj}(t) = - \sum_{l \in \Lambda} B_{kl}(t) X_{lj}(t),
\]
where \( B_{kl}(t) \in \mathbb{R}^{d \times d} \) is defined as in Lemma \ref{Lemm: Hessian}.
Substituting into the second derivative of \( X_{kl}(t) \), we obtain the second-order evolution equation:
\[
\ddot{X}_{kj}(t) = -\frac{1}{m_k} \sum_{l \in \Lambda} B_{kl}(t) X_{lj}(t).
\]
This equation holds with the initial conditions:
\[
X_{kj}(0) = \delta_{kj} I_d, \quad \dot{X}_{kj}(0) = \frac{1}{m_k} Y_{kj}(0) = 0.
\]
Since the initial velocity is zero, we apply Duhamel's principle to solve the second-order system:
\[
X_{kj}(t) = \delta_{kj} I_d - \int_0^t (t - s) \cdot \frac{1}{m_k} \sum_{l \in \Lambda} B_{kj}(s) X_{lj}(s) \, ds.
\]
To verify this formula, we compute the time derivative:
\begin{align*}
\dot{X}_{kl}(t)
&= -\frac{d}{dt} \left( \int_0^t (t - s) \cdot \frac{1}{m_k} \sum_{j \in \Lambda} B_{kj}(s) X_{jl}(s) \, ds \right) \\
&= - \int_0^t \frac{\partial}{\partial t} (t - s) \cdot \frac{1}{m_k} \sum_{j \in \Lambda} B_{kj}(s) X_{jl}(s) \, ds \\
&= - \int_0^t \frac{1}{m_k} \sum_{j \in \Lambda} B_{kj}(s) X_{jl}(s) \, ds
\end{align*}
where we used the differentiation rule 
\[\frac{d}{dt} \int_0^tf(t,s)\, ds = f(t,t) + \int_0^t\frac{\partial f}{\partial t}(t,s) \, ds. \]
Differentiating once more gives:
\begin{align*}
\ddot{X}_{kl}(t)
&= -\frac{d}{dt} \left( \int_0^t \frac{1}{m_k} \sum_{j \in \Lambda} B_{kj}(s) X_{jl}(s) \, ds \right) \\
&= - \frac{1}{m_k} \sum_{j \in \Lambda} B_{kj}(t) X_{jl}(t),
\end{align*}
which confirms the second-order equation.
\\\\
Analogous computations show that $Y_{kj}(t)$, $Z_{kj}(t)$ and $W_{kj}(t)$ satisfy expressions of similar type. We omit the computations for brevity.
%
% (ii) Consider the second-order form for $Z_{kj}(t)$, defined through the first-order system:
% $$
% \dot{Z}_{kj}(t) = \frac{1}{m_{k}} W_{kj}(t), \quad \dot{W}_{kj}(t) = - \sum_{l\in\Lambda}B_{kl}(t) Z_{lj}(t).
% $$
% Differentiating the first equation gives:
% $$
% \ddot{Z}_{kj}(t) = \frac{1}{m_k} \dot{W}_{kj}(t) = -\frac{1}{m_k}\sum_{l\in\Lambda} B_{kl}(t) Z_{lj}(t),
% $$
% which yields the second-order differential equation:
% $$
% \ddot{Z}_{kj}(t) = - \frac{1}{m_k}\sum_{l\in\Lambda} B_{kl}(t) Z_{lj}(t).
% $$
% In this case, the initial conditions are:
% $$
% Z_{kj}(0) = 0, \quad \dot{Z}_{kj}(0) = \frac{1}{m_k}W_{kj}(0) =\frac{1}{m_k}\delta_{kj}I_d.
% $$
% Applying Duhamel's principle to the second-order equation for $Z_{kl}(t)$ with nonzero initial velocity gives:
% $$
% Z_{kj}(t) = \frac{t}{m_k}\delta_{kj}  -\int_0^t (t - s) \frac{1}{m_k} \sum_{l\in\Lambda}B_{kl}(s)  Z_{lj}(s) \, ds.
% $$
% (iii) To obtain  a similar expression for $W_{kj}(t)$, we integrate the equation for $\dot{W}_{kj}(t)$ with initial condition $W_{kj}(0) = \delta_{kj}I_d$:
% $$
% W_{kj}(t) =\delta_{kj} I_d - \int_0^t \sum_{l\in\Lambda }B_{kl}(s)  Z_{lj}(s) \, ds.
% $$
% (iv) Similarly for $Y_{kj}(t)$, we get
% $$
% Y_{kj}(t)= \dot{X}_{kj}(t) m_k = -\int_0^t \sum_{l\in\Lambda }B_{kl}(s) X_{lj}(s) \, ds.
% $$
In summary, the full system of evolution equations can be written as:
$$
\begin{aligned}
X_{kj}(t) &= \delta_{kj}I_d -\int_0^t (t - s) \frac{1}{m_k}\sum_{l\in\Lambda} B_{kl}(s) X_{lj}(s) \, ds, \quad X_{kj}(0)=\delta_{kj}I_d,\quad \dot{X}_{kj}(0)=0;\\
Y_{kj}(t) &=  -\int_0^t \sum_{l\in\Lambda}B_{kj}(s) X_{jl}(s) \, ds, \quad Y_{kj}(0)=0,\quad  \dot{Y}_{kj}(0)=-B_{kj}(0);\\
Z_{kj}(t) &= \frac{t}{m_k}\delta_{kj}I_d - \int_0^t (t - s) \frac{1}{m_k} \sum_{l\in\Lambda}B_{kl}(s) Z_{lj}(s) \, ds, \quad Z_{kj}(0)=0,\quad \dot{Z}_{kj}(0) =\frac{1}{m_k}\delta_{kj}I_d; \\
W_{kl}(t) &= \delta_{kj}I_d - \int_0^t\sum_{l\in\Lambda} B_{kl}(s) Z_{lj}(s) \, ds, \quad W_{kj}(0)=\delta_{kj}I_d,\quad \dot{W}_{kj}(0)=0.
\end{aligned}
$$

\subsection{Estimates}
We now derive operator norm bounds for the entries of the matrix $B_{kj}(t)$, using the assumed decay of the interaction potentials, \textit{cf.} Assumption \ref{conditions}.

In particular, given (iii) the potentials $V_{kj}\colon \mathbb{R}^d \to \mathbb{R}$ satisfy the following exponential decay condition on their second derivatives: there exist constants $C_V$  and $C_{kj}$ for all $j, k \in \Gamma$, such that 
$$
\|\Hess( V_{kj} )\|_{\op,\infty} \leq dC_{kj} C_V^2,
$$
where $C_{kj} \leq \|\Psi\| F(d(k,j))$, with $F\colon \mathbb{R}_{\geq 0} \to (0,1]$ a suitably decaying function, see Section \ref{conditions} for the precise definition. Therefore,
$$
\|\Hess( V_{kj} )\|_{\op,\infty} \leq dC_V^2 \|\Psi\| F(d(k,j)).
$$
Recall from Lemma \ref{Lemm: Hessian} that $B_{kj}(t)$ is defined via the second derivatives of the interaction potentials
$$
B_{kj}(t) = \begin{cases}
\displaystyle \nu_{j}+\sum_{l \in \Lambda} \Hess( V_{lj})(Q_j(t) - Q_l(t)), & j = k, \\[6pt]
- \Hess( V_{kj})(Q_k(t) - Q_j(t)), & j \neq k.
\end{cases}
$$
We thus obtain
$$
\|B_{kj}(t)\|_{\infty} \leq
\begin{cases}
\displaystyle \nu_j+\sum_{l \in \Lambda} \left\| \Hess( V_{jl})(Q_j(t) - Q_l(t)) \right\|_{\op} \leq \nu_j+\sum_{l \in \Lambda} dC_V^2 \|\Psi\| F(d(j,l)), & j = k, \\[10pt]
\displaystyle dC_V^2 \|\Psi\| F(d(k,j)), & j \neq k.
\end{cases}
$$
\subsubsection{Dyson expansions}
We derive Dyson expansions for each Jacobian matrix. We give the proofs for 
$X_{kj}(t)$ and $Y_{kj}(t)$ explicitly, and leave the analogous computations for 
$Z_{kj}(t)$ and $W_{kj}(t)$ to the reader. Given the similarities between $X_{kj}(t)$ and $Z_{kj}(t)$, as well as between $Y_{kj}(t)$ and $W_{kj}(t)$, their computations are nearly identical.

\begin{proposition}\label{prop: dysons}
Let $t\in\mathbb{R}$. For all $j\neq k$ it holds
\begin{align}
\begin{cases}
\displaystyle \|X_{jk}(t)\|_{\op} \leq F(d(j,k)) (\cosh(\sqrt{C_0} t)-1), \\[8pt]
\displaystyle \|Y_{jk}(t)\|_{\op} \leq F(d(j,k)) \sqrt{C_0}\sinh(\sqrt{C_0} |t|), \\[8pt]
\displaystyle \|Z_{jk}(t)\|_{\op} \leq   F(d(j,k)) \frac{\sinh(\sqrt{C_0} |t|)}{\sqrt{C_0}}, \\[8pt]
\displaystyle \|W_{jk}(t)\|_{\op} \leq  F(d(j,k)) (\cosh(\sqrt{C_0} t)-1)
\end{cases}
\end{align}
In particular, because the right-hand side does not depend on the choice $P(0),Q(0)$ in the definition of $X_{j,k}(t)$, $Y_{j,k}(t)$, $Z_{j,k}(t)$ and $W_{j,k}(t)$, we have
\begin{align}\label{eq:cases}
\begin{cases}
\displaystyle\left\| \frac{\partial q_j(t)}{\partial q_k}\right\|_{\op,\infty} \leq F(d(j,k)) (\cosh(\sqrt{C_0} t)-1), \\[8pt]
\displaystyle \left\| \frac{\partial p_j(t)}{\partial q_k}\right\|_{\op,\infty} \leq F(d(j,k)) \sqrt{C_0}\sinh(\sqrt{C_0} |t|), \\[8pt]
\displaystyle \left\| \frac{\partial q_j(t)}{\partial p_k}\right\|_{\op,\infty} \leq   F(d(j,k)) \frac{\sinh(\sqrt{C_0} |t|)}{\sqrt{C_0}}, \\[8pt]
\displaystyle \left\| \frac{\partial p_j(t)}{\partial p_k}\right\|_{\op,\infty}\leq  F(d(j,k)) (\cosh(\sqrt{C_0} t)-1)
\end{cases}
\end{align}
\end{proposition}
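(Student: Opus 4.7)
The plan is to iterate each of the four integral equations derived at the end of the previous subsection into a Dyson-type series and estimate each term by combining the pointwise decay of the Hessian blocks $B_{jl}(s)$ with the convolution property of $F$ from Section \ref{Set-up}.

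First I would establish a single uniform estimate on the $B$-blocks: using Assumption \ref{conditions}(iii)--(iv), Assumption \ref{ass:psi} and $F(0)=1$, I would extract a constant $\tilde{C}>0$ such that $\|B_{jl}(s)\|_{\op}\leq \tilde{C}\,F(d(j,l))$ for all $j,l\in\Lambda$ and all $s\in\R$. The off-diagonal bound is explicit in the text preceding the proposition; the diagonal piece $\nu_j+dC_V^2\|\Psi\|\|F\|$ is absorbed by the same constant thanks to $F(d(j,j))=1$. I would also fix $M_{\max}:=\sup_{k\in\Gamma}m_k^{-1}<\infty$ once and for all.

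Next, for $X_{jk}$ I would iterate the integral identity starting from the trivial solution $X^{(0)}_{jk}(t)=\delta_{jk}I_d$; the $n$-th Picard iterate is a time-ordered integral over the simplex $\{0\leq s_n\leq\cdots\leq s_1\leq t\}$ containing the chain $\tfrac{1}{m_j}B_{jl_1}(s_1)\tfrac{1}{m_{l_1}}B_{l_1 l_2}(s_2)\cdots B_{l_{n-1}k}(s_n)$ weighted by $\prod_{i=1}^n(s_{i-1}-s_i)$, with $s_0:=t$. The simplex integral evaluates to $t^{2n}/(2n)!$, and applying the bound of Step 1 together with $(n-1)$ uses of the convolution property collapses the sum over $l_1,\dots,l_{n-1}$ to $C_F^{n-1}F(d(j,k))$, so that
\[
\|X^{(n)}_{jk}(t)\|_{\op}\leq \frac{(M_{\max}\tilde{C})^n\,C_F^{n-1}}{(2n)!}\,t^{2n}\,F(d(j,k)).
\]
Setting $C_0:=M_{\max}\tilde{C}\,C_F$ and summing from $n=1$ (the $n=0$ piece vanishes off the diagonal) yields the first bound of \eqref{eq:cases}; the spurious $1/C_F$ factor produced by this substitution is harmless because the convolution property forces $C_F\geq 1$. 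The same iteration performed on $W_{jk}$ gives the fourth bound.

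For $Y_{jk}$ I would substitute the Dyson series of $X_{lk}$ into $Y_{jk}(t)=-\int_0^t\sum_l B_{jl}(s)X_{lk}(s)\,ds$. Each term then carries one extra outer $B$-factor and only the single time integral $\int_0^t s^{2n}\,ds=t^{2n+1}/(2n+1)$; one further application of the convolution property on the outer sum over $l$ collapses everything to $F(d(j,k))$, and summing gives
\[
\|Y_{jk}(t)\|_{\op}\leq \tilde{C}\,F(d(j,k))\sum_{n\geq 0}\frac{C_0^n\,t^{2n+1}}{(2n+1)!}=\frac{\tilde{C}}{\sqrt{C_0}}\,F(d(j,k))\sinh(\sqrt{C_0}\,|t|),
\]
which matches the second line of \eqref{eq:cases} after, if necessary, enlarging $C_0$ so that $\tilde{C}\leq C_0$. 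The analogous calculation for $Z_{jk}$ produces the third bound; its $n=0$ piece $\tfrac{t}{m_j}\delta_{jk}I_d$ also vanishes off the diagonal. The uniformity in $(p,q)$ stated in the second half of \eqref{eq:cases} follows immediately because every estimate on $B_{jl}(s)$ depended only on $\|\Hess V_{jl}\|_{\op,\infty}$ and on fixed constants, not on the trajectory through which $B_{jl}$ is evaluated.

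The main obstacle is bookkeeping rather than any hard analysis: one must keep careful track of how many factors of $C_F$ each convolution consumes, handle the diagonal entries of $B$ (which include the pinning strength $\nu_j$ and the full row-sum of Hessians) so that they still fit under a uniform bound of the form $\tilde{C}\,F(d(j,j))$, and choose a single constant $C_0$ that simultaneously controls all four bounds and absorbs the harmless prefactors $1/C_F$ and $\tilde{C}/\sqrt{C_0}$. Convergence of each Dyson series is automatic from the $(2n)!$ or $(2n+1)!$ denominators, so no Gronwall-type argument is needed.
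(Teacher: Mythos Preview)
Your proposal is correct and follows essentially the same route as the paper: iterate the integral equations into a Dyson/Picard series, bound each iterate using the Hessian estimates on $B_{jl}$ together with repeated applications of the convolution property of $F$, and sum the resulting $t^{2n}/(2n)!$ or $t^{2n+1}/(2n+1)!$ terms into $\cosh$ or $\sinh$. The only organizational differences are that the paper carries out the iteration as an induction (stating separate lemmas for $X$ and $Y$) and splits the sum over the intermediate site into diagonal and off-diagonal parts at each step, whereas you absorb the diagonal block into a single uniform bound $\|B_{jl}\|_{\op}\leq \tilde C\,F(d(j,l))$ up front and then write out the full $n$-fold chain explicitly; both lead to the same estimates, and the paper's definition of $C_0$ is precisely the ``enlarge $C_0$ if necessary'' step you describe. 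One small imprecision: $W_{jk}$ is not obtained by ``the same iteration'' as $X_{jk}$ but rather by substituting the Dyson series of $Z$ into the single-integral formula for $W$, in exact parallel to how you obtain $Y$ from $X$; the outcome is still $\cosh(\sqrt{C_0}t)-1$, so your conclusion is unaffected.
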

These decay estimates show that each matrix entry inherits the exponential decay in space via $F(d(j,k))$, with explicit controlled time growth.\\
To prove the proposition, we first deduce the Dyson expansion for \(X(t)\), stated in the following lemma.
\begin{lemma}\label{lemma: X}
For $n \geq 0$ and all $j, k \in \Lambda$, let
$$
X_{kj}^{(n+1)}(t):= -\int_0^t (t-s) \frac{1}{m_k}\sum_{l\in\Lambda}B_{kl}(s) X_{lj}^{(n)}(s) \, ds,
$$
and  $X_{kj}^{(0)}(t)=\delta_{kj}I_d$. Then, whenever $j\neq k$,
$$
\| X_{kj}^{(n)}(t) \|_{\op} \leq \frac{(C_0 t^2)^n}{(2n)!} F(d(k,j)),
$$
where
$$
C_0 := \|m^{-1}\|_\infty \max{\{(\|\nu\|_\infty +dC_V^2\|\Psi\|\|F\|),dC_V^2\|\Psi\|C_F,1\}}.
$$
%for some constant $C_0 > 0$ depending only on $\|m^{-1}\|_\infty$, $C_V$, $\|\Psi\|$, $\|\nu\|_\infty$ and the decay function $F$. 
Here, $\|m^{-1}\|_\infty=\sup_{j}|m_j^{-1}|$ and $\|\nu\|_\infty=\sup_j|\nu_j|$, and the constants $C_F$ and $\|F\|$ are the ones defined in Assumption \ref{conditions}.
In particular, whenever $k\neq j$
$$\sum_{n=0}^\infty\| X_{kj}^{(n)}(t)\|_{\op}\leq F(d(k,j)) (\cosh(\sqrt{C_0} t)-1),$$
and $X_{kj}(t)=\sum_{n=0}^\infty X_{kj}^{(n)}(t)$. 
\end{lemma}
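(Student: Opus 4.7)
The plan is to prove the bound $\|X_{kj}^{(n)}(t)\|_{\op} \leq \frac{(C_0 t^2)^n}{(2n)!}F(d(k,j))$ by induction on $n$, holding (as a stronger statement) for \emph{all} $k,j\in\Lambda$; this avoids the awkwardness of having the recursion mix diagonal and off-diagonal contributions. The base case $n=0$ is immediate: for $k=j$ it reads $\|I_d\|_{\op}=1=F(0)$, and for $k\neq j$ it reads $0\leq F(d(k,j))$.

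For the inductive step I would take the operator norm inside the integral defining $X_{kj}^{(n+1)}$, apply the induction hypothesis to $\|X_{lj}^{(n)}(s)\|_{\op}$, and separate the temporal and spatial factors:
\begin{align*}
\|X_{kj}^{(n+1)}(t)\|_{\op} \leq \frac{C_0^n}{(2n)!\, m_k}\left(\int_0^t (t-s)\,s^{2n}\,ds\right)\sum_{l\in\Lambda}\|B_{kl}\|_{\op,\infty}F(d(l,j)).
\end{align*}
The elementary identity $\int_0^t(t-s)s^{2n}ds=\frac{t^{2n+2}}{(2n+1)(2n+2)}$ takes care of the temporal part and produces the needed $(2n+2)!$ in the denominator. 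For the spatial sum I would split the $l=k$ and $l\neq k$ contributions, use $\|B_{kk}\|_{\op,\infty}\leq \|\nu\|_\infty+dC_V^2\|\Psi\|\|F\|$ together with $F(d(k,j))\cdot 1$, use $\|B_{kl}\|_{\op,\infty}\leq dC_V^2\|\Psi\|F(d(k,l))$ for $l\neq k$, and then invoke the convolution property $\sum_l F(d(k,l))F(d(l,j))\leq C_F F(d(k,j))$. By the definition of $C_0$ (which dominates both the diagonal coefficient $\|\nu\|_\infty+dC_V^2\|\Psi\|\|F\|$ and the off-diagonal coefficient $dC_V^2\|\Psi\|C_F$ after multiplying by $\|m^{-1}\|_\infty$), the whole spatial sum is bounded by $C_0\,F(d(k,j))$, closing the induction.

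For the last two claims, once the pointwise estimate is established, summing over $n\geq 1$ (noting $X_{kj}^{(0)}=0$ whenever $k\neq j$) yields
\begin{equation*}
\sum_{n\geq 1}\|X_{kj}^{(n)}(t)\|_{\op}\leq F(d(k,j))\sum_{n\geq 1}\frac{(C_0 t^2)^n}{(2n)!}=F(d(k,j))\bigl(\cosh(\sqrt{C_0}\,t)-1\bigr).
\end{equation*}
This absolute and uniform (on compact time intervals) convergence of the Dyson series lets me pass to the limit inside the recursion via dominated convergence, so that $Y(t):=\sum_{n\geq 0}X_{kj}^{(n)}(t)$ satisfies the same Volterra integral equation as $X_{kj}(t)$. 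By the uniqueness established in Lemma \ref{lemma: uniqueness} (or directly by Gronwall applied to the difference), I conclude $X_{kj}(t)=\sum_n X_{kj}^{(n)}(t)$.

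The step I expect to be the main obstacle is the bookkeeping of the spatial sum in the inductive step: one has to make sure that after separating diagonal from off-diagonal terms, and after invoking both $\|F\|$ (from $B_{kk}$) and $C_F$ (from the convolution), the surviving prefactor fits inside the single constant $C_0$ as defined, rather than producing an extra additive contribution. The rest of the argument is essentially Gronwall/Dyson bookkeeping and should be routine once this spatial estimate is in place.
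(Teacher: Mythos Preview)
Your proposal is correct and follows essentially the same route as the paper: induction on $n$, split the sum $\sum_l \|B_{kl}\|\,\|X_{lj}^{(n)}\|$ into the $l=k$ and $l\neq k$ pieces, use the $\|F\|$ bound on the diagonal block and the convolution bound $C_F$ on the off-diagonal part, evaluate the Beta-type integral, then sum the Dyson series and invoke uniqueness. Your decision to run the induction for \emph{all} $k,j$ (checking $\|I_d\|_{\op}=1=F(0)$ at $k=j$) is in fact a useful clarification, since the inductive step needs the bound on $X_{lj}^{(n)}$ for $l=j$ as well; the paper uses this implicitly without stating it.

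Your closing worry is well placed: as written, $C_0$ dominates each of the two spatial contributions $(\|\nu\|_\infty+dC_V^2\|\Psi\|\|F\|)$ and $dC_V^2\|\Psi\|C_F$ separately (after multiplication by $\|m^{-1}\|_\infty$), but the inductive step produces their \emph{sum}. Strictly speaking one picks up an extra factor of $2$, so either $C_0$ should be defined with a sum in place of the $\max$, or the final bound carries a harmless $2^n$. The paper's proof has the same slippage; it does not affect the qualitative conclusion.
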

\begin{proof}
The first assertion is proved by induction.
\\\\
\noindent \textbf{Base case $n=0$:}\\
By definition,
$$
X_{kj}^{(0)}(t) = \delta_{kj} I_d.
$$
and $F(d(k,j))\geq 0$.
%$$
%\| X_{jk}^{(0)}(t) \| \leq F(d(j,k))
%$$
%using $F(0) = 1$. 
In particular, when $k\neq j$, we obtain $0=\| X_{kj}^{(0)}(t)\|\leq F(d(k,j))$.
\\\\
\noindent \textbf{Inductive step:}\\ Assume the claim holds for some $n \geq 0$. Then,
$$
\| X_{kj}^{(n+1)}(t) \|_{\op} \leq \int_0^{|t|} (|t| - s) \| m^{-1} \|_\infty \sum_{l \in \Lambda} \| B_{kl}(s) \|_{\op} \cdot \| X_{lj}^{(n)}(s) \|_{\op} ds.
$$
Split the sum over $l$ into diagonal and off-diagonal parts:
$$
\sum_{l \in \Lambda} \| B_{kl}(s) \|_{\op} \| X_{lj}^{(n)}(s) \|_{\op} = \| B_{kk}(s) \|_{\op} \| X_{kj}^{(n)}(s) \|_{\op} + \sum_{l \neq k} \| B_{kl}(s) \|_{\op} \| X_{lj}^{(n)}(s) \|_{\op}.
$$
Recall that the matrix elements of $B_{kk}(t)$ satisfy
$$
\| B_{kk}(t) \|_{\op} \leq \|\nu\|_\infty + dC_V^2\|\Psi\|\sum_{m\in\Lambda}F(d(k,m)),
$$
and for $k\neq l$
$$
\| B_{kl}(t) \|_{\op} \leq dC_V^2\|\Psi\|F(d(k,l)),
$$
where $F$ is the decay function (with $F(0) =1$).
We can absorb the diagonal contribution into the decay estimates and obtain
\begin{align*}
    &\sum_{l \in \Lambda} \| B_{kl}(s) \|_{\op} \| X_{lj}^{(n)}(s) \|_{\op} \leq\\&
     \left(\|\nu\|_\infty+dC_V^2 \|\Psi\| \sum_{m \in\Lambda}  F(d(k,m)) \right) \frac{(C_0 s^2)^n}{(2n)!} F(d(k,j)) + \sum_{l \neq k} dC_V^2 \|\Psi\| F(d(k,l)) \frac{(C_0 s^2)^n}{(2n)!} F(d(l,j)),
\end{align*}
where we have used the induction hypothesis.
Applying the exponential decay properties and the convolution bound, there exist constants $C_F, \|F\| > 0$, i.e. such that
$$
\sup_k\sum_{m \in\Lambda} F(d(k,m)) \leq \|F\|; \quad \text{and} \quad \sum_{l \in \Lambda} F(d(k,l)) F(d(l,j)) \leq C_F F(d(k,j)).
$$
Combining these estimates yields,
\begin{align*}
&\sum_{l \in \Lambda} \| B_{kl}(s) \|_{\op} \| X_{lj}^{(n)}(s) \|_{\op}\leq \\& (\|\nu\|_\infty +dC_V^2\|\Psi\|\|F\|)\frac{(C_0 s^2)^n}{(2n)!} F(d(k,j))+ dC_V^2 \|\Psi\| C_F\frac{(C_0 s^2)^n}{(2n)!} F(d(k,j))).
\end{align*}
Since
$$
C_0= \|m^{-1}\|_\infty \max{\{(\|\nu\|_\infty +dC_V^2\|\Psi\|\|F\|),dC_V^2\|\Psi\|C_F,1\}},
$$
it follows that
$$
\| m^{-1} \|_\infty\sum_{l \in \Lambda} \| B_{kl}(s) \|_{\op} \| X_{lj}^{(n)}(s) \|_{\op}\leq C_0 F(d(k,j))\frac{(C_0s^2)^n}{(2n)!}.$$
Hence,
$$
\| X_{kj}^{(n+1)}(t) \|_{\op} \leq  F(d(k,j)) \frac{C_0^{n+1}}{(2n)!} \int_0^{|t|} (|t| - s) s^{2n} ds.
$$
We then evaluate the integral using the substitution $u = s/|t|$ and the properties of a Beta integral,
$$
\int_0^{|t|} (|t| - s) s^{2n} ds = |t|^{2n+2} \int_0^1 (1 - u) u^{2n} du = \frac{|t|^{2n+2}}{(2n+1)(2n+2)}.
$$
Therefore,
$$
\| X_{kj}^{(n+1)}(t) \|_{\op} \leq  F(d(k,j)) \frac{C_0^{n+1} |t|^{2n+2}}{(2n)! (2n+1)(2n+2)}
$$
which is equal to
$$
\| X_{kj}^{(n+1)}(t) \|_{\op} \leq F(d(k,j)) \frac{(C_0 t^2)^{n+1}}{(2(n+1))!},
$$
%where $C_0$ is such that $C_0\geq C_1$. This completes the inductive step.
By the principle of induction,
$$
\| X_{kj}^{(n)}(t) \|_{\op} \leq F(d(k,j)) \frac{(C_0 t^2)^{n}}{(2n)!},
$$
for all $n\geq 0$. Finally, for $k \neq j$ we sum over $n \geq 1$:
\[
 \sum_{n=0}^\infty \left\| X_{kj}^{(n)}(t) \right\|_{\op} =  \sum_{n=1}^\infty \left\| X_{kj}^{(n)}(t) \right\|_{\op} \leq F(d(k,j)) \sum_{n=1}^\infty \frac{(C_0 t^2)^n}{(2n)!} = F(d(k,j)) (\cosh(\sqrt{C_0} t)-1).
\]

This proves that the  Dyson series defined by
$$
\sum_{n=0}^\infty X_{kj}^{(n)}(t), \quad \text{where} \quad X_{kj}^{(0)}(t) = \delta_{kj}I_d,
$$
converges absolutely. A direct computation shows that this series satisfies the second order differential equation for $X_{kj}$. On account of Lemma \ref{lemma: uniqueness}, we conclude that
$$
X_{kj}(t)=\sum_{n=0}^\infty X_{kj}^{(n)}(t), \quad \text{where} \quad X_{kj}^{(0)}(t) = \delta_{kj}I_d.
$$
\end{proof}
We now derive the Dyson expansion for $Y_{kj}(t)$.
\begin{lemma}
For $n\geq 1$ and all $k,j\in\Lambda$, define
$$
Y_{kj}^{(n)}(t) = -\int_0^t \sum_{l\in\Lambda}B_{kl}(s) X_{lj}^{(n-1)}(s) \, ds, 
$$
and $Y_{kl}^{(0)}(t)=0$. If $j\neq k$, it holds
$$
\|Y^{(n)}_{kj}(t)\|_{\op} \leq \frac{C_0^{n} |t|^{2n-1}}{(2n-1)!} F(d(k,j)),
$$
and $\|Y^{(0)}_{kj}(t)\|_{\op} =0$. Here, the constant $C_0$ is the one defined by Lemma \ref{lemma: X}.
In particular, if $k\neq j$
$$
\sum_{n=0}^\infty \|Y_{kj}^{(n)}(t)\|_{\op} \leq F(d(k,j)) \sqrt{C_0}\sinh(\sqrt{C_0} |t|)
$$
and 
$$
Y_{kj}(t)=\sum_{n=0}^\infty Y_{kj}^{(n)}.(t)
$$
\end{lemma}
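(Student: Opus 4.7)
The plan is to follow the two-step strategy used for Lemma \ref{lemma: X}: first establish the pointwise norm bound on each iterate $Y^{(n)}_{kj}(t)$ by induction on $n$, and then sum the series and verify that the resulting function solves the integral equation defining $Y_{kj}(t)$, so that the uniqueness statement of Lemma \ref{lemma: uniqueness} identifies the two.

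For the base case $n=1$ I would use that $X^{(0)}_{lj}(s) = \delta_{lj} I_d$, so the definition collapses to $Y^{(1)}_{kj}(t) = -\int_0^t B_{kj}(s)\,ds$; when $k \neq j$, the off-diagonal Hessian estimate of Lemma \ref{Lemm: Hessian} gives $\|B_{kj}(s)\|_{\op} \leq dC_V^2 \|\Psi\| F(d(k,j))$, and integrating over $[0,|t|]$ produces the desired bound with $C_0$ in place of $dC_V^2\|\Psi\|$ (the former dominates by the definition of $C_0$). For the inductive step, I would insert the bound $\|X^{(n-1)}_{lj}(s)\|_{\op} \leq \frac{(C_0 s^2)^{n-1}}{(2n-2)!}F(d(l,j))$ from Lemma \ref{lemma: X}, split the sum over $l$ into the diagonal contribution $l=k$ (controlled by $\|B_{kk}(s)\|_{\op} \leq \|\nu\|_\infty + dC_V^2\|\Psi\|\|F\|$) and the off-diagonal contribution $l \neq k$ (controlled by $\|B_{kl}(s)\|_{\op} \leq dC_V^2\|\Psi\|F(d(k,l))$ together with the convolution bound $\sum_l F(d(k,l))F(d(l,j)) \leq C_F F(d(k,j))$). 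After collecting constants, exactly as in Lemma \ref{lemma: X}, the spatial factor simplifies to $C_0 F(d(k,j)) \frac{(C_0 s^2)^{n-1}}{(2n-2)!}$, and the remaining time integral $\int_0^{|t|} s^{2n-2}\,ds = \frac{|t|^{2n-1}}{2n-1}$ converts this into the target bound $\|Y^{(n)}_{kj}(t)\|_{\op} \leq \frac{C_0^n |t|^{2n-1}}{(2n-1)!} F(d(k,j))$.

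Summing the resulting bound over $n \geq 1$ gives
\[
\sum_{n=1}^\infty \frac{C_0^n|t|^{2n-1}}{(2n-1)!} = \sqrt{C_0}\sinh(\sqrt{C_0}|t|),
\]
which proves both absolute convergence and the claimed $\sinh$ estimate. Setting $\tilde Y_{kj}(t) := \sum_{n=0}^\infty Y^{(n)}_{kj}(t)$, the uniform majorant on compact time intervals justifies differentiating the series termwise; the recursion for $Y^{(n)}$ combined with the series representation of $X_{lj}(t)$ provided by Lemma \ref{lemma: X} then shows that $\tilde Y_{kj}$ satisfies $\dot{\tilde Y}_{kj}(t) = -\sum_l B_{kl}(t) X_{lj}(t)$ with $\tilde Y_{kj}(0)=0$, which is exactly the variational equation for $Y_{kj}$. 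Uniqueness (Lemma \ref{lemma: uniqueness}) then identifies $Y_{kj}(t) = \tilde Y_{kj}(t)$.

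I expect the main obstacle to be the constant bookkeeping: when the sum over $l$ is split, the diagonal $l=k$ term naturally carries $F(d(k,j))$ via $X^{(n-1)}_{kj}$, while the off-diagonal terms carry it via convolution, and the definition of $C_0$ from Lemma \ref{lemma: X} must be large enough to absorb both $\|\nu\|_\infty + dC_V^2\|\Psi\|\|F\|$ and $dC_V^2\|\Psi\| C_F$ simultaneously; the maximum in the definition of $C_0$ is precisely what makes this work. A secondary technical point is justifying termwise differentiation when checking that the Dyson series satisfies the variational ODE, but the exponential majorant above makes this routine by dominated convergence on compact time intervals.
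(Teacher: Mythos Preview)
Your proposal is correct and matches the paper's proof essentially line for line: both use $X^{(0)}_{lj}=\delta_{lj}I_d$ for the base case, both feed the $X^{(n-1)}$ bounds from Lemma~\ref{lemma: X} together with the diagonal/off-diagonal split and convolution estimate on $B_{kl}$ into the integral for $Y^{(n)}$, and both sum to the $\sqrt{C_0}\sinh(\sqrt{C_0}|t|)$ series before invoking Lemma~\ref{lemma: uniqueness}. The only cosmetic difference is that the paper phrases the main step as an ``induction'' on $Y^{(n)}$ even though the inductive hypothesis on $Y$ is never actually used (only the $X$ bounds are), whereas you state more transparently that you are inserting the $X^{(n-1)}$ estimate directly.
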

\begin{proof}
We proceed by induction on $n$.\\\\
\textbf{Base case ($n=1$):}\\ From the definition,
\[
Y_{kj}^{(1)}(t) = -\int_0^t \sum_{l \in \Lambda} B_{kl}(s) X_{lj}^{(0)}(s) \, ds.
\]
Since $X_{lj}^{(0)}(s)=\delta_{lj}I_d$, we have
\[
\|Y^{(1)}_{kj}(t)\|_{\op} \leq \int_0^{|t|}  \|B_{kj}(s)\|_{\op} \, ds
\leq C_0 F(d(k,j)) \int_0^{|t|}ds=|t|C_0F(d(k,j)),
\]
where the final last inequality follows from the estimates on $B_{kj}$ for $k\neq j$.
\textbf{Inductive step:}\\
Assume that
\[
\|Y^{(n)}_{kj}(t)\|_{\op} \leq \frac{C_0^n t^{2n - 1}}{(2n - 1)!} F(d(k,j)),
\]
holds for some $n \geq 1$. We want to prove it for $n+1$. Using the recursive definition:
\[
Y^{(n+1)}_{kj}(t) = -\int_0^t \sum_{l \in \Lambda} B_{kl}(s) X^{(n)}_{lj}(s) \, ds,
\]
we estimate the matrix elements as follows:
$$
\|Y^{(n+1)}_{kj}(t)\|_{\op} \leq \int_0^{|t|} \sum_{l \in \Lambda} \|B_{kl}(s)\|_{\op} \cdot \|X^{(n)}_{lj}(s)\|_{\op} \, ds.
$$
Using the bounds on $ X^{(n)}_{lj}(s)$ and $B_{kl}(s)$ (cf. Lemma \ref{lemma: X}), we obtain
$$
\sum_{l \in \Lambda} \|B_{kl}(s)\|_{\op} \cdot \|X^{(n)}_{lk}(s)\|_{\op} \leq  \frac{C_0^{n+1} s^{2n}}{(2n)!} F(d(k,j)).
$$
Then,
$$
\|Y^{(n+1)}_{kj}(t)\|_{\op} \leq  F(d(k,j)) \cdot \frac{C_0^{n+1}}{(2n)!} \int_0^{|t|} s^{2n} \, ds = F(d(k,j)) \cdot \frac{C_0^{n+1} |t|^{2n+1}}{(2n)!(2n+1)}.
$$
As a result,
$$\|Y^{(n+1)}_{kj}(t)\|_{\op}\leq  F(d(k,j)) \cdot \frac{C_0^{n+1} |t|^{2n+1}}{(2n+1)!}.$$
This completes the induction step. Thus, for all $n\geq 1$ and $k \ne j$,
\[
\|Y^{(n)}_{kj}(t)\|_{\op} \leq \frac{C_0^n |t|^{2n-1}}{(2n-1)!} F(d(k,j)).
\]
Summing this series for $j\neq k$ gives
$$
\sum_{n=0}^\infty\| Y_{jk}^{(n)}(t)\|_{\op}\leq\sum_{n=1}^\infty\frac{C_0^{n} |t|^{2n-1}}{(2n-1)!} F(d(k,j))=F(d(k,j))\sqrt{C_0}\sinh{\sqrt{C_0}|t|}.
$$
This shows that the Dyson series
\[
Y_{kj}(t) = \sum_{n=0}^\infty Y_{kj}^{(n)}(t), \quad \text{with } Y_{kj}^{(0)}(t) = 0,
\]
converges absolutely and satisfies the relevant integral equation. By uniqueness (cf.\ Lemma~\ref{lemma: uniqueness}), this series is the solution.
\end{proof}

\begin{theorem}\label{thm: main 1}
Assume all the conditions in Section \ref{sct:subsection The phase space} are satisfied for the discrete set $\Gamma$.
Let $H_\Lambda$ be defined by \eqref{Hamiltoniannew2}.
For $X,Y\Subset\Gamma$ be  such that they are spatially separated, i.e. $\text{dist}(X,Y)>0$.  Take any finite set $\Lambda\Subset\Gamma$ with  $X\cup Y\subset\Lambda$.  Let $f\in C_b^1(\Omega_{X})$ and
$g\in C_b^1(\Omega_{Y})$. Then, there exists a positive constant $C_0$ independent of $f$ and $g$, such that for all $t\in\mathbb{R}$ 
\begin{align}
\|\{\alpha_\Lambda^t(f),g\}\|_\infty\leq 4\|f\|_{C^1} \|g\|_{C^1}\sqrt{C_0}\sinh{\sqrt{C_0}|t|}\text{D}(X,Y),
\end{align}
where
$$
D(X,Y):=\sum_{x \in X} \sum_{y \in Y} F(d(x,y)).
$$
In particular, for all $t\in\mathbb{R}$
\begin{align}
\|\{\alpha_\Lambda^t(f),g\}\|_\infty\leq 4\|f\|_{C^1} \|g\|_{C^1}\sqrt{C_0}e(^{\sqrt{C_0}|t|}-1)\text{D}(X,Y),
\end{align}
%If $t<0$, then the ensuing bound is
%\begin{align}
%\|\{\alpha_\Lambda^t(f),g\}\|_\infty\leq 4\|f\|_{C^1} \|g\|_{C^1}\sqrt{C_0}(\cosh{\sqrt{C_0}t}-1)\text{D}(X,Y).
%\end{align}
\end{theorem}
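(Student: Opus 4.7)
The plan is to combine the general inequality derived at the end of Section \ref{sec:estimates} with the four Dyson bounds from Proposition \ref{prop: dysons}, so that essentially no new analytic work is needed. Concretely, I would start from the inequality
\[ \|\{\alpha_t(f),g\}\|_\infty \leq \|f\|_{C^1}\|g\|_{C^1}\sum_{j\in X,\,k\in Y}\left(\left\|\frac{\partial q_j(t)}{\partial q_k}\right\|_{\op,\infty} + \left\|\frac{\partial p_j(t)}{\partial q_k}\right\|_{\op,\infty} + \left\|\frac{\partial q_j(t)}{\partial p_k}\right\|_{\op,\infty} + \left\|\frac{\partial p_j(t)}{\partial p_k}\right\|_{\op,\infty}\right), \]
which already reduces the theorem to bounding the four Jacobian blocks evaluated on pairs $(j,k)\in X\times Y$.

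Next, because $X$ and $Y$ are spatially separated, $\mathrm{dist}(X,Y)>0$, hence $X\cap Y=\emptyset$ and every pair $(j,k)\in X\times Y$ satisfies $j\neq k$. This is precisely the regime in which Proposition \ref{prop: dysons} applies, giving the four estimates of \eqref{eq:cases}. I would substitute them termwise into the display above, factor out the common spatial weight $F(d(j,k))$, and dominate each of the four time-dependent prefactors $\cosh(\sqrt{C_0}t)-1$, $\sqrt{C_0}\sinh(\sqrt{C_0}|t|)$, $\sinh(\sqrt{C_0}|t|)/\sqrt{C_0}$ by their common upper bound $\sqrt{C_0}\sinh(\sqrt{C_0}|t|)$. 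Here one uses the elementary inequality $\cosh(x)-1\leq \sinh(x)$ for $x\geq 0$, together with the fact that the constant $C_0$ from Lemma \ref{lemma: X} can be enlarged without loss so as to satisfy $C_0\geq 1$ (indeed its definition already contains a factor $\max\{\cdot,1\}$, so replacing the interior constants by $\max\{\cdot,\|m\|_\infty\}$ if necessary causes no harm). After collecting these bounds and recognising
\[ \sum_{j\in X,\,k\in Y}F(d(j,k))=D(X,Y), \]
the four summands produce the factor $4$ and one obtains the claimed inequality.

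For the "in particular" statement I would invoke the elementary inequality $\sinh(x)\leq e^{x}-1$ for $x\geq 0$, which follows by comparing derivatives ($\cosh(x)\leq e^{x}$) and values at $0$. Applied with $x=\sqrt{C_0}|t|$, this converts $\sinh(\sqrt{C_0}|t|)$ into $e^{\sqrt{C_0}|t|}-1$ and yields the advertised exponential form (modulo what appears to be a small typesetting glitch in the displayed formula). The main obstacle is purely organisational: one must keep careful track of the four Jacobian types and of the constants entering $C_0$, so that all four hyperbolic prefactors are uniformly dominated by a single expression and the factor $4$ in front of $\|f\|_{C^1}\|g\|_{C^1}$ emerges correctly; no further analytical ingredient beyond Section \ref{sec:estimates} and Proposition \ref{prop: dysons} is required.
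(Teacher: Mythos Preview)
Your proposal is correct and follows essentially the same route as the paper: start from the sum estimate at the end of Section~\ref{sec:estimates}, observe that disjointness of $X$ and $Y$ forces $j\neq k$ so that Proposition~\ref{prop: dysons} applies, dominate all four Jacobian bounds by $\sqrt{C_0}\sinh(\sqrt{C_0}|t|)\,F(d(j,k))$, and sum to recover $D(X,Y)$ with the factor $4$. Your explicit remark that one may enlarge $C_0$ so that $C_0\geq 1$ (needed for $\sinh(\sqrt{C_0}|t|)/\sqrt{C_0}\leq \sqrt{C_0}\sinh(\sqrt{C_0}|t|)$) is in fact a point the paper leaves implicit, so your bookkeeping is slightly more careful than the original.
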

\begin{proof}
Let us first assume $t>0$.
We already know from the computations in Section \ref{sec:estimates}
\begin{align*}
&\|\{\alpha_t(f), g\}\|_\infty\leq 4\|f\|_{C^1}\|g\|_{C^1} \sup_{\substack{j \in X \\ k \in Y}} \left\{ \left\| \frac{\partial q_j(t)}{\partial q_k}\right\|_{\op,\infty}, \left\| \frac{\partial p_j(t)}{\partial q_k}\right\|_{\op,\infty}, \left\| \frac{\partial q_j(t)}{\partial p_k}\right\|_{\op,\infty}, \left\| \frac{\partial p_j(t)}{\partial p_k}\right\|_{\op,\infty} \right\},
\end{align*}
Moreover, for $t\in\mathbb{R}$
\begin{align*}
&\max{\bigg{\{} \left\| \frac{\partial q_j(t)}{\partial q_k}\right\|_{\op,\infty}, \left\| \frac{\partial p_j(t)}{\partial q_k}\right\|_{\op,\infty}, \left\| \frac{\partial q_j(t)}{\partial p_k}\right\|_{\op,\infty}, \left\| \frac{\partial p_j(t)}{\partial p_k}\right\|_{\op,\infty}\bigg{\}}} \\&\leq
\sqrt{C_0} F(d(j,k))\sinh(\sqrt{C_0} |t|),
\end{align*}
where
$$
C_0 := \|m^{-1}\|_\infty \max{\{(\|\nu\|_\infty +dC_V^2\|\Psi\|\|F\|),dC_V^2\|\Psi\|C_F,1\}}.
$$
Hence,
\begin{align*}
&\|\{\alpha_t(f), g\}\|_\infty\leq 4\|f\|_{C^1} \|g\|_{C^1}\sqrt{C_0}\sinh(\sqrt{C_0} t)D(X,Y).
\end{align*}
%If $t<0$, a similar estimate shows that
%\begin{align*}
%\max{\{
%\|X_{kj}(t)\|, \|Y_{kj}(t)\|, \|Z_{kj}(t)\|, \|W_{kj}(t)\|\}}\leq
%\sqrt{C_0} F(d(j,k))(\cosh(\sqrt{C_0} t)-1).
%\end{align*}
Note also that the case $t=0$ implies that
$$
\|\{\alpha_0(f), g\}\|_\infty=0,
$$
as it must be, since $f,g$ have disjoint support.
In particular, for all $t\in\mathbb{R}$
\begin{align}
\|\{\alpha_\Lambda^t(f),g\}\|_\infty\leq 4\|f\|_{C^1} \|g\|_{C^1}\sqrt{C_0}(e^{\sqrt{C_0}|t|}-1)\text{D}(X,Y),
\end{align}
which follows from the estimate
\begin{align*}
\sinh(\sqrt{C_0}|t|) 
=\frac{1}{2} \left(e^{\sqrt{C_0} |t|}-e^{-\sqrt{C_0} t}\right) \leq \frac{1}{2}(e^{\sqrt{C_0} |t|}-1).
\end{align*}
\end{proof}
Note that, if $\Gamma $ is equipped with such a function $F$, then for every $ \mu > 0 $, the function
\begin{equation}
F_\mu(r) = e^{-\mu r} F(r)
\end{equation}
also satisfies the convolution condition with the same constant $C_F$ (as follows from the triangle inequality). Moreover $\|F_\mu\|\leq \|F\|$. Setting then
\begin{equation}
\|\Psi\|_\mu := \sup_{k,j \in \Gamma} \frac{C_{kj}}{F_\mu(d(k,j))}, \quad \mu>0;
\end{equation}
it is easy to see that for all $t \in\mathbb{R}$,
\begin{align*}
&\|\{\alpha_\Lambda^t(f),g\}\|_\infty\leq
 4\|f\|_{C^1} \|g\|_{C^1}\sqrt{C_\mu}\sinh{(\sqrt{C_\mu} |t|)}\sum_{x \in X} \sum_{y \in Y} e^{-\mu \text{dist}(x,y)}F(d(x,y))\\&\leq
 4\|f\|_{C^1} \|g\|_{C^1}\sqrt{C_\mu}\sinh{(\sqrt{C_\mu}|t|)}e^{-\mu \text{dist}(X,Y)}\min{\{|X|,|Y|\}}\max_{y\in\Gamma}\sum_{x \in \Gamma}  F(d(x,y))\\&=
 4\|f\|_{C^1} \|g\|_{C^1}\sqrt{C_\mu}\min{\{|X|,|Y|\}}\|F\|\sinh{(\sqrt{C_\mu}|t|)}e^{-\mu \text{dist}(X,Y)},
\end{align*}
where $C_\mu$ is given by $C_0$ with $\|\Psi\|$ replaced by $\|\Psi_\mu\|$.
In particular, 
\begin{align*}
\sinh{(\sqrt{C_\mu}|t|)} e^{-\mu \operatorname{dist}(X,Y)} \leq \frac{1}{2} e^{\sqrt{C_\mu}|t| - \mu \operatorname{dist}(X,Y)}.
\end{align*}
Hence, one obtains exponential decay in \( \operatorname{dist}(X,Y) \) whenever \(  |t| < \frac{\mu}{\sqrt{C_\mu}} \operatorname{dist}(X,Y) \). Notice that this holds for all fixed \( \operatorname{dist}(X,Y) \) if \( |t| \) is sufficiently small. 
%Similar bounds hold for the case $t<0$.
Therefore, we have obtained the following corollary.

\begin{corollary}
Assume the conditions of Theorem \ref{thm: main 1}. For all $\mu>0$ and $t\in\mathbb{R}$, it holds
\begin{align*}
\|\{\alpha_\Lambda^t(f),g\}\|_\infty\leq
 C\|f\|_{C^1} \|g\|_{C^1}e^{- \mu( \operatorname{dist}(X,Y)-\frac{\sqrt{C_\mu} |t|}{\mu})},
\end{align*}
for come constant $C>0$ depending on the Hamiltonian, the supports $X$ and $Y$ of the observables $f$ and $g$, respectively,  on the function $F$, and on $\mu$.
\end{corollary}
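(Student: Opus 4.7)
The plan is to bootstrap the result directly from Theorem \ref{thm: main 1} by applying it not to the weight function $F$ itself, but to the reweighted function $F_\mu(r) := e^{-\mu r} F(r)$. This reduces the corollary to the verification that $F_\mu$ still satisfies the three assumptions imposed on the weight in Section \ref{Set-up}, plus a few straightforward manipulations.

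First I would check that $F_\mu$ is admissible. Monotonicity and the normalization $F_\mu(0)=1$ are immediate; uniform integrability follows from $F_\mu \leq F$, which yields $\|F_\mu\| \leq \|F\| < \infty$. The key point is the convolution property: since $d(x,y) \leq d(x,z) + d(z,y)$, one has $e^{-\mu d(x,y)} \leq e^{-\mu d(x,z)} e^{-\mu d(z,y)}$, so
\[
\sum_{z \in \Gamma} F_\mu(d(x,z)) F_\mu(d(z,y)) \leq e^{-\mu d(x,y)} \sum_{z \in \Gamma} F(d(x,z)) F(d(z,y)) \leq C_F\, F_\mu(d(x,y)),
\]
giving the convolution property with the \emph{same} constant $C_F$. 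Similarly I would verify that $\|\Psi\|_\mu := \sup_{k,j} C_{kj}/F_\mu(d(k,j))$ is finite under Assumption \ref{ass:psi} (with $\|\Psi\|_\mu \geq \|\Psi\|$ since $F_\mu \leq F$), so that Theorem \ref{thm: main 1} applies and yields a constant $C_\mu$ (defined as $C_0$ but with $\|\Psi\|$ replaced by $\|\Psi\|_\mu$), together with the estimate
\[
\|\{\alpha_\Lambda^t(f),g\}\|_\infty \leq 4 \|f\|_{C^1}\|g\|_{C^1} \sqrt{C_\mu}\, \sinh(\sqrt{C_\mu}|t|) \sum_{x \in X}\sum_{y \in Y} F_\mu(d(x,y)).
\]

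Next I would extract the exponential decay in $\operatorname{dist}(X,Y)$ from the double sum. Using $d(x,y) \geq \operatorname{dist}(X,Y)$ for all $x \in X$, $y \in Y$, one factors
\[
\sum_{x \in X}\sum_{y \in Y} e^{-\mu d(x,y)} F(d(x,y)) \leq e^{-\mu \operatorname{dist}(X,Y)} \sum_{x \in X}\sum_{y \in Y} F(d(x,y)),
\]
and bounds the remaining sum by $\min\{|X|,|Y|\}\cdot \|F\|$ using uniform integrability (fix the index from the smaller set and sum the other index over $\Gamma$). Finally I apply the elementary bound $\sinh(\sqrt{C_\mu}|t|) \leq \tfrac{1}{2}e^{\sqrt{C_\mu}|t|}$, and collect all the factors that are independent of $t$ and of $\operatorname{dist}(X,Y)$ into the constant
\[
C := 2 \sqrt{C_\mu}\, \|F\| \min\{|X|,|Y|\},
\]
which depends on the Hamiltonian (through $C_\mu$), on $X$ and $Y$, on $F$, and on $\mu$, as stated.

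I do not expect any substantial obstacle here: the only step with any real content is the convolution-property check for $F_\mu$, which is a one-line triangle-inequality argument, and the rest is bookkeeping built on top of Theorem \ref{thm: main 1}. What is worth emphasizing in the write-up is that the \emph{same} constant $C_F$ works for $F_\mu$, so that invoking Theorem \ref{thm: main 1} for the new weight requires no extra assumption.
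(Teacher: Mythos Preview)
Your proposal is correct and follows essentially the same route as the paper: replace $F$ by $F_\mu(r)=e^{-\mu r}F(r)$, observe that the convolution constant $C_F$ is unchanged (triangle inequality) and $\|F_\mu\|\le\|F\|$, apply Theorem~\ref{thm: main 1} with the new weight to get the constant $C_\mu$, then pull out $e^{-\mu\operatorname{dist}(X,Y)}$, bound the residual double sum by $\min\{|X|,|Y|\}\|F\|$, and use $\sinh(x)\le\tfrac12 e^{x}$. One small caveat: finiteness of $\|\Psi\|_\mu$ does \emph{not} follow from Assumption~\ref{ass:psi} alone (you only get $\|\Psi\|_\mu\ge\|\Psi\|$), so this is an implicit extra hypothesis---but the paper glosses over the same point.
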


\section{Existence of the global dynamics}\label{Section: existence}
The first step in defining  global dynamical system, is the construction of a $C^*-$algebra. Since we are dealing with infinite particles lattices systems for which the single-site phase space is non-compact and unbounded, one has to be careful.
A promising and convenient framework is provided by the {\em commutative resolvent algebra} \cite{vanNuland_2019}, a recent development that has been shown to arise as the strict classical limit of the non-commutative resolvent algebra originally introduced by Buchholz and Grundling. This algebra is specifically tailored to accommodate the difficulties presented by unbounded configuration and momentum variables, making it a natural setting for describing observables of infinite classical systems \cite{Nuland_Ven_2023}.
% consider spin lattice models where the single-site configuration space is a compact manifold such as $\mathbb{S}^r$. In such cases, the infinite product $\prod_{\mathbb{Z}^\ell} \mathbb{S}^r$ is compact by Tychonoff's theorem, and the algebra of observables is unambiguously $C(\prod \mathbb{S}^r)$, the algebra of continuous functions. Here, notions of support and decay are straightforward. However, when the single-site phase space is non-compact-such as $\mathbb{R}^r$ or infinite-dimensional normed spaces-the classical algebra of vanishing-at-infinity continuous functions $C_0(X)$ is trivialized by Riesz' lemma, as compact subsets have empty interior. The commutative resolvent algebra $\CR(X)$ remedies this by forming a non-trivial C*-subalgebra of $C_b(X)$, the bounded continuous functions on $X$. 
Importantly,  if $X$ is an infinite dimensional normed vector space, the commutative resolvent algebra $\CR(X)$ can be constructed as the inductive limit of algebras $C_0(V)$ over finite-dimensional subspaces $V \subseteq X$, capturing the quasi-local nature of observables in a natural and well-behaved manner \cite{Buchholz_Grundling_2008,vanNuland_2019}. This inductive limit structure is crucial for studying the thermodynamic limit: infinite systems emerge as limits of increasingly large finite subsystems. This approach facilitates a rigorous algebraic description of physically significant phenomena such as spontaneous symmetry breaking and phase transitions, which are inherently infinite-volume effects, yet arise from finite approximations \cite{Moretti_vandeVen_2021,vandeVen_2022,vandeVen_2023}.
\subsection{The commutative resolvent algebra}\label{Commutative resolvent algebra}

% Ideally, this class of observables should be broad enough for one to be able to reconstruct the system from the values of the observables, small enough to simplify subsequent computations, and stable under time evolution. 
The commutative resolvent algebra is defined as follows. Let $X$ be a (possibly infinite-dimensional) real-linear inner product space. The {\em commutative resolvent algebra} of $X$, denoted $\CR(X)$, is the C*-subalgebra of the algebra of bounded operators $C_\b(X)$ generated by resolvent functions on $X$, i.e.,  functions of the form
\begin{align*}
	h_x^\lambda(y):=1/(i\lambda- x\cdot y),
\end{align*}
for $x\in X$, $\lambda\in\mathbb{R}\setminus\{0\}$. Here, the inner product $\cdot$ is the one standard one induced by the complex structure compatible with the symplectic form. 
The inner product on $X$ yields a norm $||\cdot||$ and a topology with respect to which $h_x^\lambda:X\to\mathbb C$ is a continuous function.
\\\\
We now consider $\CR(\Omega)$, with $\Omega=\ell_c(\Gamma,\R^{2r})\subseteq\ell^2(\Gamma,\R^{2r})$ defined in Section \ref{sct:subsection The phase space}. As a result, $\CR(\Omega)$ is the inductive limit of the net of all $\CR(V)$, where $V\subset \Omega$  ranges over all finite-dimensional subspaces of $\Omega$, and the connecting maps defining this limit are the pull-backs of the projection maps $W\twoheadrightarrow V$ for $V\subset W$. This remains true if one restricts the net to any cofinal class of finite-dimensional subspaces. It follows that 
$$\CR(\Omega)=\varinjlim \CR(\Omega_\Lambda),$$ where the inductive limit is taken with respect to the pull-backs of ${\pi_\Lambda}|_{\Lambda'}:\Omega_{\Lambda'}\to\Omega_{\Lambda}$ $~(\Lambda\subseteq\Lambda'\Subset\Gamma)$, where $\pi_\Lambda:\Omega\to\Omega_{\Lambda}$ is the orthogonal projection onto $\Omega_\Lambda$.
Hence, $\CR(\Omega)$ is an algebra of ``quasi-local'' observables, containing a dense subalgebra of ``local'' observables, i.e. functions that only depend on finitely many particles.
\\\\
We  define the subspace $\SR(\Omega)\subset \CR(\Omega)$ as the span of so-called levees $g\circ \pi$ for which $g$ is Schwartz, namely
\begin{align}\label{eq: levees}
	\SR(\Omega):=\text{span}\{ g\circ \pi ~|~\pi\text{ fin. dim. projection on $\Omega$, }\ g\in \S(\ran(\pi)) \},
\end{align}
where $\S(\text{ran}(\pi))$ denotes the Schwartz space on $\text{ran}(\pi)$. More generally, a ``levee'' is a function $f = g \circ \pi\in \CR(\Omega)$ for a finite dimensional projection $\pi$ and a function $g \in C_0(\text{ran}(\pi))$. By \cite[Prop. 2.4]{vanNuland_2019} the set $\SR(\Omega)$ is a dense *-subalgebra of $\CR(\Omega)$.

We can put a Poisson bracket on $\SR(\Omega)$ by use of the canonical Poisson bracket on $C^\infty(\Omega_\Lambda)\cong C^\infty(\R^{2|\Lambda|r})$, as follows. 
For any two functions $f_1,f_2\in \SR(\Omega)$ we can choose $\Lambda\Subset\Gamma$ large enough such that $f_1=g_1\circ \pi_{\Lambda}$ and $f_2=g_2\circ \pi_\Lambda$ for functions $g_1,g_2\in \SR(\Omega_\Lambda)\subseteq C^\infty(\Omega_\Lambda)$, and where $\pi_\Lambda:\Omega\to\Omega_\Lambda$ denotes the orthogonal projection.
We define
\begin{align*}
    \{g_1\circ \pi_{\Lambda},g_2\circ \pi_{\Lambda}\}:=\{g_1,g_2\}_\Lambda\circ \pi_{\Lambda},
\end{align*}
where $\{g_1,g_2\}_\Lambda$ is defined by
\begin{align*}
	\{g_1,g_2\}_{\Lambda}(p,q):=\sum_{j\in\Lambda}\sum_{i=1}^d \bigg(\frac{\partial g_1(p,q)}{\partial q_{j,i}}\frac{\partial g_2(p,q)}{\partial p_{j,i}}-\frac{\partial g_1(p,q)}{\partial p_{j,i}}\frac{\partial g_2(p,q)}{\partial q_{j,i}}\bigg),
\end{align*}
for all $(p,q)\in\Omega_\Lambda$.  One can prove that $\{g_1\circ \pi_{\Lambda},g_2\circ \pi_{\Lambda}\}$ does not depend on $\Lambda$ and lands in $\SR(\Omega)$. Note further that this bracket coincides with the one defined in Section \eqref{sec:estimates}.
%Furthermore, the Poisson bracket thus defined coincides with the Poisson bracket introduced in \cite{vanNuland_2019}. 

This algebraic formalism of resolvent algebras allows one to derive a global $C^*$-dynamical system. This has been particularly proved in \cite[Thm. 14]{Nuland_Ven_2023}. In this paper, we provide an alternative proof making use of the Lieb-Robinson bounds obtained in the previous section. The main result is the following theorem.

\begin{theorem}[Existence of Infinite-Volume Dynamics]\label{Thm:existence}
Assume $\Gamma$ is equipped with a function $F$ satisfying the conditions in Section \ref{Set-up}.
Then, the local automorphism $\alpha_t^\Lambda$ converges to a globally defined automorphism of the commutative resolvent algebra $\CR(\Omega)$, as $\Lambda\to\infty$. Moreover,  convergence is uniform for $t$ in the compact intervals $[-T, T]$. 
\end{theorem}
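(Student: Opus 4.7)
The plan is to establish the $\|\cdot\|_\infty$-Cauchy property of the net $(\alpha^\Lambda_t(f))_{\Lambda \Subset \Gamma}$, uniformly for $t \in [-T,T]$, for every $f$ in the dense $*$-subalgebra $\SR(\Omega) \subset \CR(\Omega)$, and then extend to all of $\CR(\Omega)$ by density. By linearity it suffices to work with spanning elements $f = f_0 \circ \pi_X$, where $X \Subset \Gamma$ and $f_0 \in \S(\Omega_X) \subset C_b^1(\Omega_X)$; this is precisely the format required by Theorem~\ref{thm: main 1}. The subtlety is that the naive telescoping identity for $\alpha^{\Lambda_m}_t(f) - \alpha^{\Lambda_n}_t(f)$ with $X \subset \Lambda_n \subset \Lambda_m$ would produce Poisson brackets with $H_{\Lambda_m} - H_{\Lambda_n}$, which contains the unbounded single-site terms $\|p_k\|^2/(2m_k) + \nu_k\|q_k\|^2/2$ for $k \in \Lambda_m \setminus \Lambda_n$; these lie outside the scope of the Lieb-Robinson bound as formulated.

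To sidestep this I would replace $\alpha^{\Lambda_n}_t$ by the flow $\bar\alpha^{\Lambda_n,\Lambda_m}_t$ on $\Omega_{\Lambda_m}$ generated by the \emph{free-extended} Hamiltonian $\bar H^{(\Lambda_m)}_{\Lambda_n} := H_{\Lambda_n} + \sum_{k \in \Lambda_m \setminus \Lambda_n} H^0_k$. This flow decouples into $\alpha^{\Lambda_n}_t$ on the $\Lambda_n$-coordinates and independent harmonic rotations at each site of $\Lambda_m \setminus \Lambda_n$; because $f$ depends only on the $X$-variables, $\bar\alpha^{\Lambda_n,\Lambda_m}_t(f) = \alpha^{\Lambda_n}_t(f)$. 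The fundamental theorem of calculus applied to $s \mapsto \bar\alpha^{\Lambda_n,\Lambda_m}_{t-s}(\alpha^{\Lambda_m}_s(f))$, together with the fact that Hamiltonian flows are Poisson automorphisms, then yields
\[
\alpha^{\Lambda_m}_t(f) - \alpha^{\Lambda_n}_t(f) = \int_0^t \bar\alpha^{\Lambda_n,\Lambda_m}_{t-s}\bigl(\{\alpha^{\Lambda_m}_s(f),\, W_{mn}\}\bigr)\, ds,
\]
where the residual perturbation $W_{mn} := H_{\Lambda_m} - \bar H^{(\Lambda_m)}_{\Lambda_n} = \tfrac{1}{2}\sum_{\{k,l\} \not\subset \Lambda_n} V_{kl}(q_k - q_l)$ is now a sum of uniformly bounded $C_0^\infty$ interactions. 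This free-extension device is the essential technical idea of the proof.

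Taking $\|\cdot\|_\infty$ (using that $\bar\alpha^{\Lambda_n,\Lambda_m}_{t-s}$ is an isometry on $C_b(\Omega_{\Lambda_m})$) and applying Theorem~\ref{thm: main 1} term by term, each bracket $\{\alpha^{\Lambda_m}_s(f), V_{kl}(q_k - q_l)\}$ with $\{k,l\}$ disjoint from $X$ is controlled by the factor $D(X,\{k,l\})$; for the remaining pairs, with one index in $X$ and the other in $\Lambda_m \setminus \Lambda_n$, one applies the pointwise Jacobian estimates of Proposition~\ref{prop: dysons} directly, which still provide a bound carrying the spatial factor $F(d(k,l))$ inherited from $\|V_{kl}\|_{C^1} \leq \|\Psi\|(1+dC_V)F(d(k,l))$ (Assumption~\ref{conditions}(iii) combined with Assumption~\ref{ass:psi}). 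Summing these bounds over all interactions composing $W_{mn}$ and using the convolution and uniform integrability properties of $F$, one obtains an estimate of the form
\[
\|\alpha^{\Lambda_m}_t(f) - \alpha^{\Lambda_n}_t(f)\|_\infty \leq C(T)\,\|f\|_{C^1} \sum_{x \in X} \sum_{l \not\in \Lambda_n} F(d(x,l)),
\]
which tends to zero as $\Lambda_n \nearrow \Gamma$, uniformly in $\Lambda_m \supseteq \Lambda_n$ and in $t \in [-T,T]$.

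This Cauchy estimate produces a limit $\alpha^\infty_t(f) := \lim_\Lambda \alpha^\Lambda_t(f) \in C_b(\Omega)$; since $\CR(\Omega)$ is closed, the limit lies in $\CR(\Omega)$. Each $\alpha^\Lambda_t$ is a $\|\cdot\|_\infty$-isometric $*$-homomorphism, and these properties pass to $\alpha^\infty_t$ on $\SR(\Omega)$ and, by density, to all of $\CR(\Omega)$. The group law $\alpha^\infty_{t+s} = \alpha^\infty_t \circ \alpha^\infty_s$ is obtained by passing to the limit in the finite-volume identity, and bijectivity (hence the automorphism property) follows from running the same argument with $-t$. Strong continuity in $t$ is inherited from the uniform convergence on $[-T,T]$ together with the continuity of $t \mapsto \alpha^\Lambda_t(f)$ for each fixed $\Lambda$. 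The only genuine obstacle throughout is the unboundedness of the single-site quadratic terms, which the free-extension trick removes cleanly; everything else amounts to the Lieb-Robinson machinery of Section~\ref{sec:estimates} and standard $C^*$-algebraic density arguments.
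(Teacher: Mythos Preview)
Your approach is correct and constitutes a genuine alternative to the paper's argument. Both proofs must neutralize the unbounded single-site harmonic terms in the telescoping identity, and they do so differently. The paper passes to the interaction picture, setting $\gamma_t^\Lambda(f):=f\circ\Phi^{0,\Lambda}_{-t}\circ\Phi^\Lambda_t$ and interpolating $s\mapsto \gamma_s^{\Lambda_2}(\gamma_{t-s}^{\Lambda_1}(f))$; the resulting Hamiltonian difference $\tilde H_{\Lambda_2}^{\mathrm{int}}(s)-\tilde H_{\Lambda_1}^{\mathrm{int}}(s)$ is then purely a (free-evolved) sum of bounded $V_{kl}$'s. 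You instead stay in the Heisenberg picture and insert the \emph{free-extended} dynamics $\bar\alpha^{\Lambda_n,\Lambda_m}_t$ generated by $H_{\Lambda_n}+\sum_{k\in\Lambda_m\setminus\Lambda_n}H^0_k$, so that the residual $W_{mn}$ again contains only bounded interactions. Both devices are equivalent in spirit, and both ultimately feed into the Jacobian bounds of Proposition~\ref{prop: dysons}; what differs is packaging.

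Two observations on the comparison. First, your route invokes Theorem~\ref{thm: main 1} directly for the pairs $\{k,l\}$ disjoint from $X$, which is pleasing since the theorem is the advertised tool of the section; the paper, somewhat surprisingly, bypasses Theorem~\ref{thm: main 1} altogether and re-derives the needed Poisson-bracket estimates from scratch using the chain rule and Proposition~\ref{prop: dysons}. Second, your approach forces a case split (disjoint versus overlapping supports), and in the overlapping case you need uniform-in-time bounds on the \emph{diagonal} Jacobian blocks $X_{kk}(t),\ldots,W_{kk}(t)$, which Proposition~\ref{prop: dysons} as stated covers only off-diagonally; these diagonal bounds do follow from the same Dyson expansion (the series sums to $\cosh(\sqrt{C_0}t)$ etc.\ with $F(0)=1$), but you should make that explicit. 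The paper's interaction-picture argument avoids this split at the cost of importing an extra harmonic-rotation estimate (their constant $C_{\mathrm{harm}}$). Both proofs conclude identically by quoting \cite{Nuland_Ven_2023} for invariance of $\CR(\Omega_\Lambda)$ under $\alpha^\Lambda_t$ and for the algebraic properties of the limit.
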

\begin{proof}
Let us first take $t>0$.
Fix a local observable 
\[
f \in \SR(\Omega_X) \subset \CR(\Omega_X), 
\]
supported in a finite set \(X \subset \Gamma\). Let \(\Lambda_1 \subset \Lambda_2\) be finite subsets of \(\Gamma\) such that \(X \subset \Lambda_1\).
For a finite volume \(\Lambda \subset \Gamma\), write
\begin{equation}
H_\Lambda = H_\Lambda^{har} + H_\Lambda^{int}, \quad
H_\Lambda^{ har} = \sum_{k \in \Lambda} \bigg{(}\frac{\|p_k\|^2}{2m_k} + \frac{\nu_k \|q_k\|^2}{2}\bigg{)}, \quad
H_\Lambda^{int} = \sum_{Z \subset \Lambda} \Psi(Z),
\end{equation}
where
$$\Psi(Z)(p,q):=\begin{cases}
   & \frac{1}{2}V_{kl}(q_k-q_l),\quad \quad \text{ if}\quad Z=\{k, l\};\\&
    0, \quad \quad \text{else}\
    \end{cases}
$$
We denote by $\Phi_t^{0,\Lambda}\) the flow of the harmonic oscillator, the time evolution by 
$$
\alpha_t^{0,\Lambda}(f) := f \circ \Phi_t^{0,\Lambda},\quad f\in \CR(\Omega_{\Lambda}).
$$
Since $\Phi_t^{0,\Lambda}$ is only a rotation, the pullback leaves the commutative resolvent algebra $\CR(\Omega_{\Lambda})$ invariant, i.e. 
$$
f\in \CR(\Omega_{\Lambda})\implies \alpha_t^{0,\Lambda}(f)\in  \CR(\Omega_{\Lambda}).
$$
We also define the interaction-picture observable
\begin{equation*}
\gamma_t^\Lambda(f) :=f\circ\Phi_{-t}^{0,\Lambda}\circ\Phi_t^\Lambda,
\end{equation*}
where $\Phi_t^\Lambda$ denotes the flow of the full Hamiltonian $H_\Lambda$.
It satisfies the time-dependent equation
\begin{equation}
\frac{d}{dt} \gamma_t^\Lambda(f) = \big\{ \tilde H_\Lambda^{ int}(t), \gamma_t^\Lambda(f) \big\}, \quad \gamma_0^\Lambda(f) = f,
\end{equation}
with the interaction-picture Hamiltonian
\begin{equation}
\tilde H_\Lambda^{ int}(t) := \alpha_{-t}^{0,\Lambda}(H_\Lambda^{ int}) = \sum_{Z \subset \Lambda} \Psi(Z) \circ \Phi_{-t}^{0,\Lambda}.
\end{equation}
Let us denote by $\mathcal{B}$ the set of all  $Z\subset \Lambda_2$ for which  $Z \cap (\Lambda_2 \setminus \Lambda_1) \neq \emptyset$. Then, defining
\[
F(s) := \gamma_s^{\Lambda_2} \big( \gamma_{t-s}^{\Lambda_1}(f) \big), \quad s \in [0,t],
\]
it follows that
\begin{align}\label{fund}
 \gamma_t^{\Lambda_2}(f) - \gamma_t^{\Lambda_1}(f)
&= F(t)-F(0)=\int_0^t \frac{d}{ds} F(s) \, ds\nonumber \\&=\int_0^t ds \, \gamma_s^{\Lambda_2} \Big( \big\{ \tilde H_{\Lambda_2}^{int}(s) - \tilde H_{\Lambda_1}^{ int}(s), \gamma_{t-s}^{\Lambda_1}(f) \big\} \Big),
\end{align}
with
\begin{equation*}
\tilde H_{\Lambda_2}^{ int}(s) - \tilde H_{\Lambda_1}^{int}(s) = \sum_{Z \in\mathcal{B}} \alpha_{-s}^{0,\Lambda_2}(\Psi(Z)).
\end{equation*}
We are going to use  the classical Lieb-Robinson bound for bounded interactions to estimate the norm difference
\begin{align}\label{normdiff}
    \| \gamma_t^{\Lambda_2}(f) - \gamma_t^{\Lambda_1}(f)\|_\infty.
\end{align}
To this end, note that $\gamma_{t-s}^{\Lambda_1}(f)$ is still defined in $\CR(\Omega_{\Lambda_1})$, since the local dynamics $\alpha_t^\Lambda$ leaves the algebra $\CR(\Omega_\Lambda)$ invariant for each $t\in\mathbb{R}$ and $\Lambda\Subset\Gamma$, and the part coming from the harmonic oscillator only rotates \cite[Thm. 13]{Nuland_Ven_2023}.
Let us now estimate \eqref{normdiff}. Using the identity obtained in \eqref{fund}, it holds
\begin{align*}
    \|\gamma_t^{\Lambda_2}(f) - \gamma_t^{\Lambda_1}(f)\|_\infty&\leq \sum_{\{k,j\} \in \mathcal{B}} \int_0^t ds \| \{\alpha_{-s}^{0,\Lambda_2}(\Psi(\{k,j\})),\gamma_{t-s}^{\Lambda_1}(f) \}  \|_\infty\\&=
    \sum_{\{k,j\} \in \mathcal{B}} \int_0^t d\tau \| \{\alpha_{-(t-\tau)}^{0,\Lambda_2}(\Psi(\{k,j\})),\gamma_{\tau}^{\Lambda_1}(f) \}  \|_\infty.
\end{align*}
We further estimate the integrand
\begin{align*}
&\left\{ \alpha_{-(t-\tau)}^{0,\Lambda_2}(\Psi(\{k,j\})), \gamma_{\tau}^{\Lambda_1}(f) \right\}(p,q)\\& = \bigg{(}\frac{1}{2} 
 \frac{\partial (V_{kj}\circ \Phi_{-(t-\tau)}^{0,\Lambda_2})}{\partial q_{k}}(p,q)\bigg{)}^T\frac{\partial \gamma_{\tau}^{\Lambda_1}(f)}{\partial p_{k}}(p,q)\\&+
 \bigg{(}\frac{1}{2}
 \frac{\partial (V_{kj}\circ \Phi_{-(t-\tau)}^{0,\Lambda_2})}{\partial q_{j}}(p,q) \bigg{)}^T\frac{\partial \gamma_{\tau}^{\Lambda_1}(f)}{\partial p_{j}}(p,q).
\end{align*}
As proved in \cite[Lemma 15]{Nuland_Ven_2023}, for all $t\in\mathbb{R}$ it holds
$$
\bigg{\|}\frac{\partial (V_{kj}\circ \Phi_{t}^{0,\Lambda_2})}{\partial q_{k}} \bigg{\|}_{\op,\infty}\leq dC_{kl}C_V\leq \|\Psi\| dC_V F(d(k,j)),
$$
where the last inequality following from Assumption \ref{ass:psi}.
Let us  write 
\[
\tilde \Phi_\tau(p,q) := \Phi_{-\tau}^{0,\Lambda_1} \circ \Phi_\tau^{\Lambda_1}(p,q) =  \Phi_{-\tau}^{0,\Lambda_1} (p(\tau),q(\tau))=(\tilde p (\tau),\tilde q (\tau)).
\]
Then, by the chain rule,
$$
\frac{\partial \gamma_\tau^{\Lambda_1}(f)}{\partial p_{k}}(p,q) = \sum_{j \in X} \left[\bigg{(}
\frac{\partial f}{\partial q_j} (\tilde \Phi_\tau(p,q))\bigg{)}^T \cdot \frac{\partial \tilde q_j(\tau)}{\partial p_{k}}(p,q) + \bigg{(}\frac{\partial f}{\partial p_j}(\tilde \Phi_\tau(p,q)) \bigg{)}^T\cdot \frac{\partial \tilde p_j(\tau)}{\partial p_{k}}(p,q)
\right],
$$
where
$$
\tilde q_j(\tau)(p,q)=  (\Phi_{-\tau}^{0,\Lambda_1}(p(\tau),q(\tau)))_j^q.
$$
Since  $f\in \mathcal{S}_R(\Omega_X)$ each of its (first) derivatives are uniform bounded, i.e.  $\|f\|_{C^1}<\infty$, it again follows that by the chain rule that
\begin{equation}
\frac{\partial \tilde q_j(\tau)}{\partial p_k} (p,q)
= \sum_{\ell \in X} \left(
\frac{\partial (\Phi_{-\tau}^{0,\Lambda_1})_j^q}{\partial q_\ell}(p(\tau),q(\tau)) \frac{\partial q_\ell(\tau)}{\partial p_k}(p,q)
+ \frac{\partial  (\Phi_{-\tau}^{0,\Lambda_1})_j^q}{\partial p_\ell}(p(\tau),q(\tau)) \frac{\partial p_\ell(\tau)}{\partial p_k}(p,q)
\right).
\end{equation}
Furthermore, as $\Phi_\tau^{0,\Lambda_1}$ is a harmonic rotation and the coefficients are all bounded by a constant $C_{harm}:=\max{\{\sup{\sqrt{m_k\nu_k}},1/\inf{\sqrt{m_k\nu_k}}\}}$   (Assumption \ref{conditions} (iv) and \cite[Lemma 15]{Nuland_Ven_2023}), we may estimate
\[
\bigg{\|}\frac{\partial \tilde q_j(\tau)}{\partial p_{k}}\bigg{\|}_{\op,\infty}
\;\le\;
C_{harm}\sum_{\ell \in X} \Biggl(
\bigg{\|}\frac{\partial q_\ell(\tau)}{\partial p_{k}}\bigg{\|}_{\op,\infty}
+
\bigg{\|}\frac{\partial p_\ell(\tau)}{\partial p_{k}}\bigg{\|}_{\op,\infty}
\Biggr),
\]
and similarly for $\frac{\partial \tilde p_j(\tau)}{\partial p_{k}}(p,q)$.
The norms on the right-hand side  are estimated as follows.  If we let 
\begin{align*}
J_{\ell k}(t):=\max{\bigg{\{} \left\| \frac{\partial q_\ell(t)}{\partial q_k}\right\|_{\op,\infty}, \left\| \frac{\partial p_\ell(t)}{\partial q_k}\right\|_{\op,\infty}, \left\| \frac{\partial q_\ell(t)}{\partial p_k}\right\|_{\op,\infty}, \left\| \frac{\partial p_\ell(t)}{\partial p_k}\right\|_{\op,\infty}\bigg{\}}}
\end{align*}
it follows from the above that
$$
\bigg{\|}\frac{\partial \tilde q_j(\tau)}{\partial p_{k}}\bigg{\|}_{op}\leq 2C_{harm}\sum_{\ell\in X}J_{\ell k}(t).
$$
This entails the bound
$$
\bigg{\|}\frac{\partial \gamma_\tau^{\Lambda_1}(f)}{\partial p_{k}}\bigg{\|}_{op}\leq  4 C_{harm}\|f\|_{C^1}\sum_{\ell\in X} J_{\ell k}(t), \quad \text{for all}\quad k\in\Gamma.
$$
%Similarly,
%$$
%\bigg{\|}\frac{\partial \gamma_\tau^{\Lambda_1}(f)}{\partial p_{j}}\bigg{\|}_{op}\leq  4 C_{harm}\|f\|_{C^1}\sum_{\ell\in X} J_{\ell j}(t).
%$$
We conclude that
\begin{align}\label{imp88}
\bigg{\|}\left\{ \alpha_{-(t-\tau)}^{0,\Lambda_2}(\Psi(\{k,j\})), \tilde\alpha_{\tau}^{\Lambda_1}(f) \right\}\bigg{\|}_\infty\leq 4dC_{harm}\|f\|_{C^1}\|\Psi\|C_V F(d(k,j))\sum_{\ell\in X}(J_{\ell k}(t)+J_{\ell j}(t)).
\end{align}
%As seen in the previous section, for a fixed $\mu>0$, we define 
%$F_\mu(r) := e^{-\mu r} F(r).$
%By direct inspection, the $\mu$-dependent quantities entering the definition of the Jacobian constant 
%$C_0$ (here denoted by $C_\mu$) are $\|\Psi\|_\mu$, $C_{F_\mu}$, and $\|F_\mu\|$. 
%Nevertheless, $C_\mu$ depends on $\mu$ only through the modified interaction $\Psi_\mu$, 
%as $C_{F_\mu}$ is $\mu$-independent and $\|F_\mu\|\le \|F\|$.
Hence, using the decay estimates on the Jacobians from the previous sections, this further implies that, \eqref{imp88} is bounded by
\begin{align*}
 C_0'\sqrt{C_\mu} \sinh(\sqrt{C_0} |t|) F(d(k,j)) \sum_{\ell\in X} \big( F(d(\ell,k)) + F(d(\ell,j)), 
\end{align*}
where $C_0'$ absorbs the constants $4,C_V,\|f\|_{C^1},d$, $C_{harm}$ and $\|\Psi\|$. 
Continuing from the previous estimate, integrating in time yields
\begin{align}\label{almost}
&\int_0^t \bigg{\|}\left\{ \alpha_{-(t-\tau)}^{0,\Lambda_2}(\Psi(\{k,j\})), \tilde\alpha_{\tau}^{\Lambda_1}(f) \right\}\bigg{\|}_\infty d\tau
\nonumber \\&\leq C_0' \sqrt{C_0} F(d(k,j))  \sum_{\ell \in X} \big( F(d(\ell,k)) + F(d(\ell,j)) \int_0^{|t|} \sinh(\sqrt{C_0} s) \, ds.
\end{align}
Using
$$
\int_0^{|t|} \sinh(\sqrt{C_0} s) \, ds \leq \frac{1}{\sqrt{C_0}}(\cos(\sqrt{C_0} |t|)-1)\leq  \frac{1}{\sqrt{C_0}}\cos(\sqrt{C_0} |t|)
$$
we get the following upper bound for \eqref{almost}, i.e.
\begin{align*}
&\leq C_0' F(d(k,j)) \sum_{\ell\in X} \big( F(d(\ell,k)) + F(d(\ell,j)) \big)  \cosh(\sqrt{C_0}|t|).
\end{align*}
Summing over all pairs $\{k,j\} \in \mathcal{B}$, where $\mathcal{B}$ is the set of interactions intersecting $\Lambda_2 \setminus \Lambda_1$, we have
\begin{align*}
&\sum_{\{k,j\} \in \mathcal{B}} \int_0^t \bigg{\|}\left\{ \alpha_{-(t-\tau)}^{0,\Lambda_2}(\Psi(\{k,j\})), \gamma_{\tau}^{\Lambda_1}(f) \right\}\bigg{\|}_\infty d\tau
\leq \\& C_0'  \cosh(\sqrt{C_0}|t|) \sum_{\{k,j\} \in \mathcal{B}} F(d(k,j)) \sum_{\ell \in X} \big( F(d(\ell,k)) + F(d(\ell,j)) \big).
\end{align*}
Using the convolution property of $F_0$, this quantity is bounded by
$$
\leq 2C_{F}C_0' \cosh(\sqrt{C_0} |t|)\sum_{\ell \in X} \sum_{x \in \Lambda_2 \setminus \Lambda_1} F(d(\ell,x)).
$$
Since $\ell \in X \subset \Lambda_1$, we use the properties of $F$ (see Section \ref{Set-up}) and finiteness of $X$ to conclude that
$$
\sum_{\ell\in X}\sum_{x \in \Lambda_2 \setminus \Lambda_1} F(d(\ell,x)) \to 0,
$$
as $\Lambda_1,\Lambda_2\nearrow\Gamma$.
%$$
%\sum_{\ell\in X}\sum_{x \in \Lambda_2 \setminus \Lambda_1} F_\mu(d(\ell,x)) \leq |X |\|F\| e^{-\mu \operatorname{dist}(X, \Lambda_2 \setminus \Lambda_1)}.
%$$
Putting all together,
\begin{align*}
&\sum_{\{k,l\} \in \mathcal{B}} \int_0^t \bigg{\|}\left\{ \alpha_{-(t-\tau)}^{0,\Lambda_2}(\Psi(\{k,j\})), \gamma_{\tau}^{\Lambda_1}(f) \right\}\bigg{\|}_\infty d\tau
\leq\\& 2C_FC'   \cosh(\sqrt{C_0}|t|)\sum_{\ell\in X}\sum_{x \in \Lambda_2 \setminus \Lambda_1} F_\mu(d(\ell,x))\to 0, \quad (\Lambda_2,\Lambda_1\nearrow\Gamma).
\end{align*}

From the right-hand side one immediately sees that
 $\gamma_t^{\Lambda}$ is a Cauchy net in $\CR(\Omega)$, hence convergent.
\\\\
Finally, recall
\[
\gamma_t^\Lambda(f) := \alpha_{-t}^{0,\Lambda} \circ \alpha_t^\Lambda(f) \quad \implies \quad
\alpha_t^\Lambda(f) = \gamma_t^{\Lambda}(f\circ \Phi_t^{0,\Lambda})=\gamma_t^{\Lambda}(f\circ \Phi_t^{0,X})
\]
where $\Phi_t^{0,X}$ is the free harmonic oscillator flow restricted to $X$. Note that $f\circ \Phi_t^{0,X}\in \CR(\Omega_X)$ is a function independent of $\Lambda$.
Hence, if $\Lambda_1,\Lambda_2$ grow large,
\[
\|\alpha_t^{\Lambda_2}(f) - \alpha_t^{\Lambda_1}(f) \|_\infty\to 0,
\]
as desired.
 From our bounds it is clear that this holds for any compact time interval.
To conclude we note that the facts
\begin{itemize}
    \item  $\alpha_t(\cdot)$ extends to all of $\CR(\Omega)$;
    \item  $\alpha_t(\cdot)$ is an isometric $*$-homomorphism;
    \item $t\mapsto \alpha_t$ is a  group homomorphism,
\end{itemize}
are direct consequences of the proof of \cite[Thm. 14]{Nuland_Ven_2023}. Hence,  $t\mapsto \alpha_t$ is a one-parameter subgroup of automorphisms of $\CR(\Omega)$ and thus describes the infinite-volume dynamics. This finishes the proof.
 \end{proof}

\noindent
{\textbf{Acknowledgements.} Both authors acknowledge the  support of Jean-Bernard Brue and Teun van Nuland for their feedback. We also thank Nicolo' Drago for pointing out the paper by Marchioro et al.}

\end{document}